\documentclass[aps,pra,reprint,superscriptaddress]{revtex4-2}
\usepackage{amsmath}
\usepackage{amssymb}
\usepackage{amsfonts}
\usepackage{amsthm}
\usepackage{mathrsfs}
\usepackage{bm}
\usepackage{array}
\usepackage{booktabs}
\usepackage{graphicx}
\usepackage{xcolor}
\usepackage[colorlinks=true,linkcolor=red,citecolor=blue,urlcolor=blue]{hyperref}
\usepackage{braket}
\usepackage{orcidlink}

\newtheorem{theorem}{Theorem}
\newtheorem{proposition}{Proposition}

\theoremstyle{remark}
\newtheorem{remark}{Remark}
\theoremstyle{plain}

\newcommand{\Nph}{\mathcal{N}_{\mathrm{phase}}}
\newcommand{\Ndep}{\mathcal{N}_{\mathrm{dep}}}
\newcommand{\Nad}{\mathcal{N}_{\mathrm{AD}}}
\newcommand{\Nid}{\mathcal{N}_{0}}

\begin{document}

\title{Analytic Benchmarks for Coherence-to-Entanglement Conversion
  under Post-Gate Noise in CNOT-Based Protocols}

\author{Asad Ali\orcidlink{0000-0001-9243-417X}}
\email{asal68826@hbku.edu.qa}
\affiliation{Qatar Center for Quantum Computing, College of Science and Engineering,
  Hamad Bin Khalifa University, Doha, Qatar}

\author{H.~Kuniyil\orcidlink{0000-0003-0338-1278}}
\affiliation{Qatar Center for Quantum Computing, College of Science and Engineering,
  Hamad Bin Khalifa University, Doha, Qatar}

\author{M.~I~Hussain\orcidlink{0000-0002-6231-7746}}
\affiliation{Qatar Center for Quantum Computing, College of Science and Engineering, Hamad Bin Khalifa University, Doha, Qatar}

\author{M.~T.~Rahim\orcidlink{0000-0003-1529-928X}} 
\affiliation{Qatar Center for Quantum Computing, College of Science and Engineering, Hamad Bin Khalifa University, Doha, Qatar}

\author{A.~Slaoui\orcidlink{0000-0002-5284-3240}}
\affiliation{LPHE-Modelling and Simulation, Faculty of Sciences, Muhammad V University
  in Rabat, Rabat, Morocco}
\affiliation{Centre of Physics and Mathematics, CPM, Faculty of Sciences, Muhammad V
  University in Rabat, Rabat, Morocco}

\author{Saif Al-Kuwari\orcidlink{0000-0002-4402-7710}}
\email{smalkuwari@hbku.edu.qa}
\affiliation{Qatar Center for Quantum Computing, College of Science and Engineering,
  Hamad Bin Khalifa University, Doha, Qatar}

\date{\today}

\begin{abstract}
Coherence-to-entanglement conversion transforms single-qubit superposition into a practical two-qubit resource, but noise limits this process in near-term quantum hardware. We derive closed-form benchmarks for a minimal CNOT primitive in which a coherent qubit and an incoherent ancilla generate entanglement before undergoing phase damping, global depolarizing, amplitude damping, or independent local depolarizing noise. Using the $\ell_1$-norm of coherence and negativity, we prove the noiseless law $\mathcal{N}_0=C_{\ell_1}/2$, valid for arbitrary mixed inputs, and obtain exact negativities, survival fractions, and entanglement-sudden-death thresholds. For all $X$-state-preserving channels, a master relation shows that entanglement loss results from the competition between coherence suppression and partial-transpose spectral shifts. Phase damping yields $\eta=1-p$ without finite-noise sudden death; global depolarization gives coherence-dependent sudden death; amplitude damping adds an excited-population penalty and sudden death only for $\theta>\pi/4$; while local depolarization is most destructive at equal depolarizing strength. The initial survival slopes, $-1$, $-3/2$, $-2$, and $-3$, act as compact noise fingerprints. Since concurrence satisfies $C=2\mathcal{N}$ for the generated states, all robustness rankings remain unchanged. Mapping channel parameters to $T_1$, $T_\varphi$, and average gate fidelity connects the theory to hardware-level performance.
\end{abstract}

\keywords{quantum coherence, entanglement negativity, coherence-to-entanglement
conversion, entanglement sudden death, open quantum systems}

\maketitle

\section{Introduction}
\label{sec:intro}

Quantum coherence and entanglement are two fundamental and interconvertible
resources in quantum information science~\cite{nielsen2010quantum,
horodecki2009quantum,baumgratz2013quantifying,streltsov2017colloquium,
naseri2022entanglement,shi2017coherence,feng2023coherence,liu2019coherence,
ahnefeld2022coherence,ye2026coherence}.
The resource theory of coherence, formalized by Baumgratz, Cramer, and
Plenio~\cite{baumgratz2013quantifying}, established the $\ell_1$-norm of coherence
as an operationally motivated monotone and showed that local coherence can be
converted into bipartite entanglement by coupling a coherent system to an incoherent
ancilla via an incoherent operation~\cite{streltsov2015measuring,streltsov2017colloquium}.
This convertibility underpins the resource-theoretic view of coherence as an
input that can be transformed into entanglement.

A canonical example of such a conversion is the CNOT gate.  Preparing qubit $A$ in
the coherent state $\ket{\psi_A}=\cos\theta\ket{0}+e^{i\phi}\sin\theta\ket{1}$ and
qubit $B$ in the incoherent reference state $\ket{0}$, the CNOT gate produces
$\ket{\Psi}_{AB}=\cos\theta\ket{00}+e^{i\phi}\sin\theta\ket{11}$, converting local
single-qubit coherence into bipartite entanglement.  For $\theta=\pi/4$ the output
is a Bell state; for $\theta=0$ or $\pi/2$ the input is incoherent and no
entanglement is generated.  In realistic quantum devices, however, this
conversion is degraded by environmental noise---dephasing, energy relaxation, and
depolarization~\cite{clerk2010introduction,suter2016colloquium}---acting
after the gate.  Understanding how different noise mechanisms affect
coherence-to-entanglement conversion is essential for characterising and improving
quantum processors.

This paper studies coherence-to-entanglement conversion in this minimal protocol by
treating \emph{four} post-gate channels on an equal footing---phase damping, global
depolarizing, amplitude damping, and the \emph{independent local single-qubit
depolarizing} channel, the last of which yields the most fragile behaviour of the
four---and by organizing all four channels through a single competition between
coherence suppression and partial-transpose spectrum shift (Section~\ref{sec:ESD}).
The specific contributions of the present work are as follows.
\begin{enumerate}
  \item We derive closed-form entanglement-survival fraction
    $\eta(\theta,p)=\mathcal{N}/\mathcal{N}_0$ under \emph{four} noise channels
    and compare them analytically and graphically.
  \item We show that the four channels produce qualitatively distinct degradation
    patterns: (i)~multiplicative suppression with no sudden death (phase damping);
    (ii)~coherence-dependent sudden death for all nonzero inputs, with more coherent
    inputs more robust (global depolarizing); (iii)~population-transfer correction
    causing sudden death only for $\theta>\pi/4$, with the robustness ordering
    \emph{reversed} relative to depolarizing (amplitude damping); and (iv)~the
    strongest degradation, with sudden death for all inputs at the lowest noise
    threshold (local single-qubit depolarizing).
  \item We show that all four channels are governed by a single master relation
    $\mathcal{N}=\max[0,\,f(\text{noise})\,\tfrac{1}{2}\sin2\theta-g(\text{noise})]$,
    reducing the four-channel comparison to a per-channel pair of
    coherence-suppression and spectrum-shift functions (Table~\ref{tab:fg}), and
    derive the analytic sudden-death thresholds as explicit functions of the input
    coherence angle.
  \item We establish that all results are \emph{measure-independent}: every state
    in the protocol satisfies $C=2\mathcal{N}$ exactly (concurrence equals twice the
    negativity), so all thresholds and orderings are identical under either measure,
    and the noiseless identity $\mathcal{N}_0=\tfrac{1}{2}C_{\ell_1}$ holds for
    arbitrary mixed single-qubit inputs, not only pure ones.
  \item We connect the noise parameters to physical device quantities ($T_1$, $T_2$,
    gate fidelity) and provide order-of-magnitude estimates for superconducting
    processors.
\end{enumerate}

The CNOT gate throughout this paper is assumed \emph{ideal}; all noise is applied
after the gate as an independent post-gate channel.  This deliberate idealisation
allows each channel's effect to be isolated analytically.  The limitations of this
model are discussed explicitly in Sections~\ref{sec:physical_params}
and~\ref{sec:conclusion}.

The paper is organised as follows.  Section~\ref{sec:framework} defines the input
state, $\ell_1$-norm coherence, and negativity.  Section~\ref{sec:unitary} derives the
noiseless $\mathcal{N}_0=\tfrac{1}{2}C_{\ell_1}$ relation.
Section~\ref{sec:channels} defines the four noise channels.
Section~\ref{sec:PT} develops the $X$-state partial-transpose framework.
Sections~\ref{sec:phase}, \ref{sec:dep}, \ref{sec:AD}, and~\ref{sec:loc} derive
the exact negativities under phase damping, global depolarizing, amplitude
damping, and local depolarizing, respectively.
Section~\ref{sec:ESD} analyses entanglement sudden death.
Section~\ref{sec:efficiency} compares entanglement-survival fractions.
Section~\ref{sec:physical_params} maps noise parameters to physical quantities.
Section~\ref{sec:results_summary} collects all closed-form results.
Discussion and conclusions are in Sections~\ref{sec:discussion}
and~\ref{sec:conclusion}.

\section{Theoretical Framework}
\label{sec:framework}

\subsection{System and Initial State}

We consider a two-qubit system with joint Hilbert space
$\mathcal{H}_{AB}\cong\mathbb{C}^4$, using the computational basis
$\{\ket{00},\ket{01},\ket{10},\ket{11}\}$.
Qubit $A$ is initialised in
\begin{equation}
  \ket{\psi_A}=\cos\theta\,\ket{0}+e^{i\phi}\sin\theta\,\ket{1},
  \quad \theta\in\!\left[0,\tfrac{\pi}{2}\right],\quad\phi\in[0,2\pi),
  \label{eq:psiA}
\end{equation}
and qubit $B$ in the incoherent state $\rho_B=\ket{0}\!\bra{0}$.

\subsection{Quantum Coherence Measure}

The $\ell_1$-norm of coherence of qubit $A$, defined with respect to the computational
basis~\cite{baumgratz2013quantifying}, is
\begin{equation}
  C_{\ell_1}(\rho_A)=\sum_{i\neq j}|\rho_{ij}|.
  \label{eq:Cl1_def}
\end{equation}
For the pure state~\eqref{eq:psiA}:
\begin{equation}
  C_{\ell_1}(\ket{\psi_A})=\sin(2\theta).
  \label{eq:Cl1}
\end{equation}
The coherence is maximal at $\theta=\pi/4$ and vanishes at $\theta=0,\pi/2$.

\textbf{Role of the phase $\phi$.}
All negativity and survival-fraction expressions derived below depend only on
$|c|=\tfrac{1}{2}\sin(2\theta)$, the magnitude of the CNOT output's
anti-diagonal coherence, which is independent of $\phi$.  All results therefore
hold for arbitrary $\phi\in[0,2\pi)$.

\subsection{Entanglement Measure}

Bipartite entanglement is quantified by the negativity~\cite{vidal2003g,
plenio2005introduction}:
\begin{equation}
  \mathcal{N}(\rho_{AB})=\sum_{\lambda_i<0}|\lambda_i|,
  \label{eq:neg_evals}
\end{equation}
where $\{\lambda_i\}$ are the eigenvalues of the partial transpose $\rho_{AB}^{T_B}$.
For two-qubit systems, $\mathcal{N}>0$ is equivalent to entanglement.
We use the negativity throughout for its analytic transparency, but the choice
carries no loss of generality: every state produced by the protocol below (in the
noiseless limit and under all four channels) is an $X$-state for which the Wootters
concurrence~\cite{wootters1998entanglement} satisfies $C=2\mathcal{N}$ exactly
(Appendix~\ref{app:calc}).  Consequently all thresholds, survival fractions, and
robustness orderings reported here are \emph{measure-independent}: replacing
negativity by concurrence rescales every quantity by a factor of two and leaves the
sudden-death boundaries and orderings unchanged.

\section{Ideal Conversion Protocol}
\label{sec:unitary}

Applying the CNOT gate to $\ket{\psi_A}\otimes\ket{0}$ yields
\begin{equation}
  \ket{\Psi}=\cos\theta\,\ket{00}+e^{i\phi}\sin\theta\,\ket{11}.
  \label{eq:Psi}
\end{equation}
The corresponding density operator is
\begin{equation}
  \rho=
  \begin{pmatrix}
    a & 0 & 0 & c \\ 0 & 0 & 0 & 0 \\ 0 & 0 & 0 & 0 \\ c^* & 0 & 0 & b
  \end{pmatrix},
  \label{eq:rho_Xstate}
\end{equation}
where $a=\cos^2\theta$, $b=\sin^2\theta$, $c=\tfrac{1}{2}e^{-i\phi}\sin(2\theta)$
(see Appendix~\ref{app:calc} for the full derivation).

\begin{proposition}[Noiseless coherence--entanglement relation]
\label{prop:CE}
For the CNOT protocol with input~\eqref{eq:psiA}:
\begin{equation}
  \mathcal{N}_0(\theta)=\tfrac{1}{2}\sin(2\theta)=\tfrac{1}{2}C_{\ell_1}(\ket{\psi_A}).
  \label{eq:CE_relation}
\end{equation}
\end{proposition}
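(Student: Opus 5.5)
The plan is to compute the partial transpose of the output density operator~\eqref{eq:rho_Xstate} directly and read off its spectrum. Transposing on qubit $B$ exchanges the entries $\ket{00}\!\bra{11}\leftrightarrow\ket{01}\!\bra{10}$. The coherence $c$ therefore moves from the outer anti-diagonal to the inner anti-diagonal block. Concretely, $\rho^{T_B}$ has diagonal entries $(a,0,0,b)$, and its only off-diagonal entries are $c$ and $c^*$, placed at positions $(2,3)$ and $(3,2)$.

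The matrix $\rho^{T_B}$ thus splits into the direct sum of the $1\times1$ blocks $a$ and $b$ (on $\ket{00}$ and $\ket{11}$) and the $2\times2$ block $\begin{pmatrix}0&c\\ c^*&0\end{pmatrix}$ (on $\ket{01},\ket{10}$). The eigenvalues are therefore
\begin{equation*}
  \cos^2\theta,\quad \sin^2\theta,\quad \pm|c|,
\end{equation*}
with $|c|=\tfrac12\sin(2\theta)\ge 0$ for $\theta\in[0,\pi/2]$. The first two are nonnegative, so the only possibly negative eigenvalue is $-|c|$. By~\eqref{eq:neg_evals} this gives $\mathcal{N}_0=|c|=\tfrac12\sin(2\theta)$, independent of $\phi$.

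Finally, comparing with~\eqref{eq:Cl1}, $C_{\ell_1}(\ket{\psi_A})=\sin(2\theta)$, which yields the second equality in~\eqref{eq:CE_relation}.

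There is no real obstacle. The only care points are identifying the correct positions of the entries after the partial transpose, and noting that $\sin 2\theta\ge0$ on the stated range so that the absolute value can be dropped. At the endpoints $\theta=0,\pi/2$ the block is zero and $\mathcal{N}_0=0$, consistently.

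If one wants the mixed-input version mentioned in the introduction, the same route works. For an input $\rho_A=\begin{pmatrix}a&c\\c^*&b\end{pmatrix}$, the CNOT output has the same $X$ form with $c$ in the corners, so $\mathcal{N}_0=|c|=\tfrac12 C_{\ell_1}(\rho_A)$.
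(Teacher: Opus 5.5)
Your proposal is correct and follows essentially the paper's route: the paper deduces the statement from Proposition~\ref{prop:Xstate}, whose proof is exactly your block decomposition of $\rho^{T_B}$ into the $1\times1$ blocks $\cos^2\theta,\sin^2\theta$ and the $2\times2$ block with eigenvalues $\pm|c|$, combined with $|c|=\tfrac12\sin(2\theta)$ and $C_{\ell_1}=\sin(2\theta)$. Your mixed-input extension likewise matches Remark~\ref{rem:mixed}.
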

The proof follows from $|c|=\tfrac{1}{2}\sin(2\theta)$ and the $X$-state negativity
result of Proposition~\ref{prop:Xstate}; see Appendix~\ref{app:calc}.
The maximum $\mathcal{N}_0=1/2$ is reached at $\theta=\pi/4$.

\begin{remark}[Mixed-input extension of the noiseless relation]
\label{rem:mixed}
The identity~\eqref{eq:CE_relation} is not restricted to pure inputs.  For an
arbitrary single-qubit input
$\rho_A=\bigl(\begin{smallmatrix}\rho_{00}&\rho_{01}\\\rho_{01}^*&\rho_{11}\end{smallmatrix}\bigr)$
with incoherent ancilla, the CNOT maps the off-diagonal $\rho_{01}$ directly onto
the $\{\ket{00},\ket{11}\}$ coherence of the output while populating only the
diagonal elsewhere, so the noiseless output negativity is
$\mathcal{N}_0=|\rho_{01}|$.  Since the $\ell_1$-norm of coherence of a single
qubit is $C_{\ell_1}(\rho_A)=2|\rho_{01}|$, the relation
$\mathcal{N}_0=\tfrac{1}{2}C_{\ell_1}(\rho_A)$ holds for \emph{every} input state,
pure or mixed.  The pure-state expression~\eqref{eq:CE_relation} is the special
case $|\rho_{01}|=\tfrac{1}{2}\sin(2\theta)$.
\end{remark}

\section{Noise Channels}
\label{sec:channels}

After the ideal CNOT, the two-qubit state is subjected to a post-gate noise map.
The noise parameter in each channel is a phenomenological strength in $[0,1]$
whose physical interpretation depends on the channel; see
Section~\ref{sec:physical_params}.

\subsection{Phase-Damping Channel}

The single-qubit phase-damping channel with dephasing strength $p$ is defined by
the Kraus operators~\cite{nielsen2010quantum}
\begin{equation}
  K_0=\begin{pmatrix}1&0\\0&\sqrt{1-p}\end{pmatrix},\quad
  K_1=\begin{pmatrix}0&0\\0&\sqrt{p}\end{pmatrix}.
  \label{eq:PD_Kraus}
\end{equation}
Off-diagonal elements are suppressed by $\sqrt{1-p}$; populations are unchanged.
The two-qubit channel is $\mathcal{E}_{\mathrm{phase}}^{(2)}=
\mathcal{E}_{\mathrm{phase}}\otimes\mathcal{E}_{\mathrm{phase}}$.

\subsection{Global Depolarizing Channel}

The global two-qubit depolarizing channel is
\begin{equation}
  \mathcal{E}_{\mathrm{dep}}(\rho)=(1-p)\rho+\frac{p}{4}\mathbb{I}_4,
  \quad p\in[0,1].
  \label{eq:dep_channel}
\end{equation}
Unlike phase damping, it also shifts all diagonal populations toward $1/4$,
producing isotropic mixing.

\textbf{Global vs.\ local depolarizing.}
Equation~\eqref{eq:dep_channel} is a \emph{global} two-qubit channel, distinct from
applying independent single-qubit depolarizing channels
$\mathcal{E}_p(\rho_j)=(1-p)\rho_j+p\tfrac{1}{2}\mathbb{I}_{2}$~\cite{nielsen2010quantum} to each qubit
(where $p$ is the per-qubit depolarization probability, related to the Pauli-error
rate by $p=4q/3$).
The analytic negativity under local depolarizing is derived in
Section~\ref{sec:loc} and its landscape is shown in Fig.~\ref{fig:landscape}.
We choose the global model as the primary focus because it is the natural
error model for two-qubit gates characterised by randomised benchmarking.
For the channel $\mathcal{E}_{\mathrm{dep}}(\rho)=(1-p)\rho+p\,\mathbb{I}_4/4$ used
here, the parameter $p$ is fixed by the average gate fidelity through
$p=\tfrac{4}{3}(1-F_{\mathrm{avg}})$~\cite{nielsen2002simple,magesan2011scalable}
(derived in Section~\ref{sec:physical_params}).

\subsection{Amplitude Damping Channel}

The single-qubit amplitude damping (AD) channel models $T_1$ energy relaxation
from $\ket{1}$ to $\ket{0}$.  Its Kraus operators are~\cite{nielsen2010quantum}
\begin{equation}
  A_0=\begin{pmatrix}1&0\\0&\sqrt{1-\gamma}\end{pmatrix},\quad
  A_1=\begin{pmatrix}0&\sqrt{\gamma}\\0&0\end{pmatrix},
  \quad\gamma\in[0,1].
  \label{eq:AD_Kraus}
\end{equation}
The parameter $\gamma=1-e^{-t/T_1}$ is the decay probability.
We apply independent AD to both qubits, giving the two-qubit channel
$\mathcal{E}_{\mathrm{AD}}^{(2)}=\mathcal{E}_{\mathrm{AD}}\otimes\mathcal{E}_{\mathrm{AD}}$
with Kraus operators $A_{ij}=A_i\otimes A_j$.  Unlike phase damping, AD also
changes populations, creating a qualitatively different degradation depending on
$\theta$.

\section{Restricted X-State Structure and Partial Transpose}
\label{sec:PT}

Both phase damping and global depolarizing preserve the $X$-state block structure of
the CNOT output.  Amplitude damping preserves a general $X$-state with additional
nonzero populations.  The key algebraic tool is:

\begin{proposition}[Negativity of the restricted $X$-state]
\label{prop:Xstate}
For the restricted $X$-state
$\rho_X=\mathrm{diag}(a',0,0,b')+c'\ket{00}\!\bra{11}+c'^*\ket{11}\!\bra{00}$
with $a',b'\ge0$, $a'+b'=1$:
\begin{equation}
  \mathcal{N}(\rho_X)=|c'|.
  \label{eq:N_Xstate}
\end{equation}
\end{proposition}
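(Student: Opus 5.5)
The plan is to compute the partial transpose $\rho_X^{T_B}$ explicitly, block-diagonalize it, and read off its spectrum. We take the partial transpose on qubit $B$ in the computational basis, using the rule $(\ket{ij}\!\bra{kl})^{T_B}=\ket{il}\!\bra{kj}$.

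\textbf{Step 1: the partial transpose.} The diagonal entries $\ket{00}\!\bra{00}$ and $\ket{11}\!\bra{11}$ are invariant under this rule. The coherence $c'\ket{00}\!\bra{11}$ is mapped to $c'\ket{01}\!\bra{10}$, and its conjugate $c'^*\ket{11}\!\bra{00}$ is mapped to $c'^*\ket{10}\!\bra{01}$. Hence
\begin{equation*}
  \rho_X^{T_B}=a'\ket{00}\!\bra{00}+b'\ket{11}\!\bra{11}
  +c'\ket{01}\!\bra{10}+c'^*\ket{10}\!\bra{01}.
\end{equation*}
This is a direct sum of the $1\times1$ blocks $a'$ and $b'$ on $\ket{00}$ and $\ket{11}$, and the $2\times2$ block $\bigl(\begin{smallmatrix}0&c'\\c'^*&0\end{smallmatrix}\bigr)$ on $\mathrm{span}\{\ket{01},\ket{10}\}$.

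\textbf{Step 2: the spectrum.} The inner block has zero trace and determinant $-|c'|^2$, so its eigenvalues are $\pm|c'|$. The full spectrum of $\rho_X^{T_B}$ is therefore $\{a',\,b',\,|c'|,\,-|c'|\}$.

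\textbf{Step 3: the negativity.} Since $a',b'\ge0$ by hypothesis, the only possibly negative eigenvalue is $-|c'|$. Definition~\eqref{eq:neg_evals} then gives $\mathcal{N}(\rho_X)=|c'|$. This covers the degenerate case $c'=0$, where the result is $0$.

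\textbf{Main obstacle.} There is no real analytic obstacle; the only care needed is bookkeeping in Step 1, namely tracking which matrix elements move under $T_B$. One could also note that positivity of $\rho_X$ forces $|c'|^2\le a'b'$, but this is not needed for the negativity formula. The normalization $a'+b'=1$ plays no role beyond making $\rho_X$ a valid state, and the phase of $c'$ drops out through $|c'|$. The same block decomposition will be reused for general $X$-states later in the paper.
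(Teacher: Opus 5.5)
Your proposal is correct and follows essentially the same route as the paper's proof in Appendix~\ref{app:calc}. Both arguments block-diagonalize $\rho_X^{T_B}$ into the $1\times1$ blocks $a'$, $b'$ and the $\{\ket{01},\ket{10}\}$ block $\bigl(\begin{smallmatrix}0&c'\\c'^*&0\end{smallmatrix}\bigr)$ with eigenvalues $\pm|c'|$, then read off $\mathcal{N}=|c'|$; your explicit treatment of the case $c'=0$ is slightly more careful than the paper's reference to a ``unique negative eigenvalue.''
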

The proof and the partial-transpose spectra for all four channels are given in
Appendix~\ref{app:calc}.  This result applies directly to the phase-damping output;
the depolarizing and amplitude damping outputs require separate calculations owing
to their modified diagonal entries.

\section{Phase-Damping Channel: Exact Negativity}
\label{sec:phase}

Independent phase damping suppresses the two-qubit coherence $\ket{00}\!\bra{11}$
by a factor $(1-p)$ while leaving populations unchanged.  The evolved state is
\begin{equation}
  \rho_{\mathrm{phase}}=\begin{pmatrix}
    \cos^2\theta & 0 & 0 & \frac{1-p}{2}e^{-i\phi}\sin(2\theta) \\
    0 & 0 & 0 & 0 \\ 0 & 0 & 0 & 0 \\
    \frac{1-p}{2}e^{i\phi}\sin(2\theta) & 0 & 0 & \sin^2\theta
  \end{pmatrix},
  \label{eq:rho_phase}
\end{equation}
which retains the restricted $X$-state form.
Proposition~\ref{prop:Xstate} then gives directly:
\begin{equation}
  \Nph(\theta,p)=\tfrac{1}{2}(1-p)\sin(2\theta).
  \label{eq:Nphase}
\end{equation}
Eq.~\eqref{eq:Nphase} is the first main result; it is listed in Table~\ref{tab:results_summary}.
The phase-damping negativity is a purely multiplicative suppression,
$\Nph=(1-p)\mathcal{N}_0$.  Entanglement survives for all $p<1$ and any
$\theta\in(0,\pi/2)$: \emph{no finite-noise sudden death}.
The two-dimensional landscape $\Nph(\theta,p)$ is shown in
Fig.~\ref{fig:landscape}(a).
Numerical validation (Appendix~\ref{app:numerics}) confirms agreement to
$\Delta_{\max}<10^{-12}$.

\section{Global Depolarizing Channel: Exact Negativity}
\label{sec:dep}

Applying $\mathcal{E}_{\mathrm{dep}}$ to the CNOT output yields (see
Appendix~\ref{app:calc}).
Defining $\alpha=(1-p)\cos^2\!\theta+\tfrac{p}{4}$,
$\beta=(1-p)\sin^2\!\theta+\tfrac{p}{4}$,
$d=\tfrac{p}{4}$, and $c=\tfrac{1-p}{2}e^{-i\phi}\sin(2\theta)$:
\begin{equation}
  \rho_{\mathrm{dep}}=
  \begin{pmatrix}
    \alpha & 0 & 0 & c \\
    0 & d & 0 & 0 \\
    0 & 0 & d & 0 \\
    c^* & 0 & 0 & \beta
  \end{pmatrix}.
\end{equation}
The partial transpose has eigenvalues $p/4\pm\tfrac{1}{2}(1-p)\sin(2\theta)$
in the $\{\ket{01},\ket{10}\}$ block (Appendix~\ref{app:calc}), giving:
\begin{equation}
  \Ndep(\theta,p)=\max\!\left[0,\,\tfrac{1}{2}(1-p)\sin(2\theta)-\tfrac{p}{4}\right].
  \label{eq:Ndep}
\end{equation}
At $p=0$ this reduces to $\mathcal{N}_0$.  The isotropic identity admixture
shifts the partial-transpose spectrum by $p/4$, allowing the negative eigenvalue
to cross zero at a finite noise strength.
The two-dimensional landscape $\Ndep(\theta,p)$ is shown in
Fig.~\ref{fig:landscape}(b), where the white dashed curve marks $p_c(\theta)$.
Numerical validation confirms $\Delta_{\max}<10^{-12}$ (Appendix~\ref{app:numerics}).

\section{Amplitude Damping Channel: Exact Negativity}
\label{sec:AD}

\subsection{Evolved Density Matrix}

Applying independent amplitude damping to both qubits (Kraus operators
$A_{ij}=A_i\otimes A_j$, $i,j\in\{0,1\}$) to the CNOT output~\eqref{eq:rho_Xstate}
yields (derivation in Appendix~\ref{app:calc}):
Setting $a=\cos^2\!\theta+\gamma^2\sin^2\!\theta$,
$b=(1-\gamma)^2\sin^2\!\theta$,
$f=\gamma(1-\gamma)\sin^2\!\theta$, and
$c=\tfrac{1}{2}(1-\gamma)\sin(2\theta)\,e^{-i\phi}$:
\begin{equation}
  \rho_{\mathrm{AD}}=
  \begin{pmatrix}
    a & 0 & 0 & c \\
    0 & f & 0 & 0 \\
    0 & 0 & f & 0 \\
    c^* & 0 & 0 & b
  \end{pmatrix}.
  \label{eq:rho_AD}
\end{equation}
This is a general $X$-state.  The nonzero $\ket{01}$/$\ket{10}$ populations
$\gamma(1-\gamma)\sin^2\!\theta$ arise from transitions $\ket{11}\to\ket{01}$
and $\ket{11}\to\ket{10}$ via amplitude damping of individual qubits.

\subsection{Partial Transpose and Spectrum}

Taking $T_B$ maps $\ket{00}\!\bra{11}\to\ket{01}\!\bra{10}$, giving
Using the same abbreviations $a,b,f,c$ as Eq.~\eqref{eq:rho_AD}:
\begin{equation}
  \rho_{\mathrm{AD}}^{T_B}=
  \begin{pmatrix}
    a & 0 & 0 & 0 \\
    0 & f & c & 0 \\
    0 & c^* & f & 0 \\
    0 & 0 & 0 & b
  \end{pmatrix}.
  \label{eq:PT_AD}
\end{equation}
This is block-diagonal.  The $2\times2$ central block in the
$\{\ket{01},\ket{10}\}$ subspace has eigenvalues
\begin{equation}
  \lambda_\pm=(1-\gamma)\!\left[\gamma\sin^2\!\theta\pm\tfrac{1}{2}\sin(2\theta)\right].
  \label{eq:evals_AD}
\end{equation}
The negative eigenvalue is
\begin{align}
  \lambda_- &=(1-\gamma)\!\left[\gamma\sin^2\!\theta-\tfrac{1}{2}\sin(2\theta)\right]\nonumber\\
            &=(1-\gamma)\sin\theta\!\left[\gamma\sin\theta-\cos\theta\right],
\end{align}
which is negative if and only if $\gamma<\cot\theta$.

\subsection{Exact Negativity under Amplitude Damping}

The negativity is the absolute value of the single negative partial-transpose
eigenvalue~\eqref{eq:evals_AD}, $\Nad=\max[0,-\lambda_-]=\max[0,|\lambda_-|]$.
Substituting $\lambda_-=(1-\gamma)[\gamma\sin^2\!\theta-\tfrac{1}{2}\sin(2\theta)]$
from the spectrum above gives the closed form
\begin{equation}
  \Nad(\theta,\gamma)
  =\max\!\left[0,\,(1-\gamma)\!\left(\tfrac{1}{2}\sin(2\theta)-\gamma\sin^2\!\theta\right)\right].
  \label{eq:Nad}
\end{equation}

\textbf{Physical interpretation.}
Equation~\eqref{eq:Nad} has a two-term structure:
\begin{equation}
  \Nad=(1-\gamma)\Nid-\gamma(1-\gamma)\sin^2\!\theta.
  \label{eq:Nad_decomposed}
\end{equation}
The first term $(1-\gamma)\Nid$ mirrors the phase-damping result with $p\to\gamma$:
a multiplicative coherence suppression.  The second term $\gamma(1-\gamma)\sin^2\!\theta$
is a \emph{population-transfer correction} unique to amplitude damping, arising
because $\ket{11}\to\ket{01}$ and $\ket{11}\to\ket{10}$ transitions populate
the central partial-transpose block and shift $\lambda_-$ upward.  This correction
is proportional to $\sin^2\!\theta$, the excited-state population of the CNOT output.
At $\gamma=0$ one recovers $\mathcal{N}_0$; at $\gamma=1$ the negativity vanishes.
The full two-dimensional landscape $\Nad(\theta,\gamma)$ is shown in
Fig.~\ref{fig:landscape}(c), where the ESD boundary
$\gamma_c(\theta)=\cot\theta$ (white dashed curve, $\theta>\pi/4$ only)
separates the entangled and separable regions.
Numerical validation (Appendix~\ref{app:numerics}) confirms $\Delta_{\max}<10^{-12}$.

\section{Local Depolarizing Channel: Exact Negativity}
\label{sec:loc}

\subsection{Evolved Density Matrix}

Applying independent single-qubit depolarizing
$\mathcal{E}_p(\rho_j)=(1-p)\rho_j+p\tfrac{1}{2}\mathbb{I}_{2}$
to both output qubits of the CNOT yields an $X$-state with
anti-diagonal element $(1-p)^2c$ and equal off-diagonal populations
$\beta=p(2-p)/4$ in the $\{\ket{01},\ket{10}\}$ block.  The full state is
\begin{equation}
  \rho_{\mathrm{loc}}=\begin{pmatrix}
    a'' & 0 & 0 & (1-p)^2 c \\[2pt]
    0 & \beta & 0 & 0 \\[2pt]
    0 & 0 & \beta & 0 \\[2pt]
    (1-p)^2 c^* & 0 & 0 & b''
  \end{pmatrix}
\end{equation}
where $a''=(1-p/2)^2\cos^2\!\theta+(p/2)^2\sin^2\!\theta$,
$b''=(p/2)^2\cos^2\!\theta+(1-p/2)^2\sin^2\!\theta$, and
$\beta=p(2-p)/4$ (see Appendix~\ref{app:calc}).

\subsection{Partial Transpose and Exact Negativity}

The $\{\ket{01},\ket{10}\}$ block of $\rho_{\mathrm{loc}}^{T_B}$ has eigenvalues
$\beta\pm(1-p)^2|c|$.  The unique negative eigenvalue gives
\begin{equation}
  \mathcal{N}_{\mathrm{loc}}(\theta,p)
  =\max\!\left[0,\;\frac{(1-p)^2}{2}\sin(2\theta)-\frac{p(2-p)}{4}\right].
  \label{eq:Nloc}
\end{equation}

\textbf{Physical interpretation.}
Compared with global depolarizing (Eq.~\eqref{eq:Ndep}), local depolarizing
suppresses the coherence more strongly ($(1-p)^2$ vs.\ $(1-p)$) and produces
a larger mixing term ($p(2-p)/4$ vs.\ $p/4$).
Consequently $\mathcal{N}_{\mathrm{loc}}\leq\mathcal{N}_{\mathrm{dep}}$ for all
$(\theta,p)$: local depolarizing is strictly more damaging than global depolarizing
per unit noise parameter.
Numerical validation (Appendix~\ref{app:numerics}) confirms $\Delta_{\max}<10^{-12}$;
Fig.~\ref{fig:landscape}(d) shows the full $\mathcal{N}_{\mathrm{loc}}(\theta,p)$
landscape, and Fig.~\ref{fig:three_channel} the $\theta=\pi/4$ cut alongside the
analytic curves.

\subsection{ESD Threshold}

Unlike phase damping (no finite sudden death) but like global depolarizing,
local depolarizing causes entanglement sudden death for \emph{all}
$\theta\in(0,\pi/2)$.  Setting $\lambda_-=0$:
\begin{equation}
  p_c^{\mathrm{loc}}(\theta)=1-\frac{1}{\sqrt{2\sin(2\theta)+1}}.
  \label{eq:pc_loc}
\end{equation}
Writing $p_c^{\mathrm{dep}}=2s/(2s+1)=1-1/(2s+1)$ with $s=\sin(2\theta)\in(0,1]$,
the two thresholds differ only in whether the subtracted term is $1/(2s+1)$ or
$1/\!\sqrt{2s+1}$.  Since $\sqrt{2s+1}<2s+1$, we have
$1/\!\sqrt{2s+1}>1/(2s+1)$, hence
$p_c^{\mathrm{loc}}(\theta)<p_c^{\mathrm{dep}}(\theta)$:
local depolarizing extinguishes entanglement at a \emph{lower} noise strength
than global depolarizing for every $\theta$.

\subsection{Entanglement-Survival Fraction}

The entanglement-survival fraction under local depolarizing is
\begin{equation}
  \eta_{\mathrm{loc}}(\theta,p)
  =\max\!\left[0,\;(1-p)^2-\frac{p(2-p)}{2\sin(2\theta)}\right].
  \label{eq:eta_loc}
\end{equation}
At $\theta=\pi/4$: $\eta_{\mathrm{loc}}(\pi/4,p)=\max[0,(1-p)^2-p(2-p)/2]
=\max[0,1-3p+3p^2/2]$.

\section{Entanglement Sudden Death}
\label{sec:ESD}

\subsection{Unified Structure of the Four Channels}

Although the four channels were derived separately, their negativities share a
single algebraic form.  Each post-gate channel acts on the restricted $X$-state in
two competing ways: it multiplies the entanglement-carrying anti-diagonal coherence
$\tfrac{1}{2}\sin2\theta$ by a \emph{coherence-suppression factor} $f$, and it
shifts the relevant partial-transpose block diagonal upward by a \emph{spectrum-shift
term} $g$.  The surviving negativity is the residual,
\begin{equation}
  \mathcal{N}(\theta,x)=\max\!\left[0,\;
  f(x)\,\tfrac{1}{2}\sin2\theta-g(\theta,x)\right],
  \label{eq:master}
\end{equation}
where $x$ denotes the channel's noise parameter ($p$ or $\gamma$).  The four
channels differ only in the pair $(f,g)$, collected in Table~\ref{tab:fg}.
Entanglement sudden death occurs precisely when the shift overtakes the suppressed
coherence, $g(\theta,x)=f(x)\,\tfrac{1}{2}\sin2\theta$; phase damping has $g\equiv0$
and therefore never reaches it.

\begin{table}[t]
\caption{The four post-gate channels reduced to the master
form~\eqref{eq:master}: coherence-suppression factor $f$ and spectrum-shift term
$g$.  Phase damping has no shift ($g=0$) and hence no sudden death; amplitude
damping is the only channel whose shift is input-angle dependent, which is why its
sudden-death region is confined to $\theta>\pi/4$.}
\label{tab:fg}
\begin{ruledtabular}
\begin{tabular}{lccc}
Channel & Parameter & $f$ & $g(\theta)$ \\
\colrule
Phase damping        & $p$      & $1-p$     & $0$ \\
Global depolarizing  & $p$      & $1-p$     & $p/4$ \\
Amplitude damping    & $\gamma$ & $1-\gamma$& $\gamma(1-\gamma)\sin^2\!\theta$ \\
Local depolarizing   & $p$      & $(1-p)^2$ & $p(2-p)/4$ \\
\end{tabular}
\end{ruledtabular}
\end{table}

This unified view makes the fragility ordering transparent.  Stronger suppression
(smaller $f$) and a larger shift (bigger $g$) both reduce the residual: local
depolarizing combines the smallest $f$, namely $(1-p)^2$, with the largest constant
shift $p(2-p)/4$, so it dies first; phase damping has the largest $f$ and zero
shift, so it never dies.  Amplitude damping is the outlier whose shift carries an
explicit $\sin^2\!\theta$, concentrating its damage on the more excited inputs and
producing the reversed ordering discussed below.

The structure is displayed geometrically in Fig.~\ref{fig:fgplane}, which plots each
channel as a trajectory in the $(f,g)$ plane.  The picture exposes a relation not
evident from Table~\ref{tab:fg}: global and local depolarizing trace the
\emph{same} line $g=(1-f)/4$, since $p(2-p)=1-(1-p)^2$, and differ only in how
quickly they move along it as the noise grows.  Phase damping runs along $g=0$ and
never enters the separable region, while amplitude damping is the only channel whose
trajectory is curved.  Sudden death corresponds to a trajectory crossing the line
$g=\tfrac{1}{2}f\sin2\theta$; the two depolarizing channels and amplitude damping
cross it (the latter only for $\theta>\pi/4$, where its $\sin^2\!\theta$ shift is
large enough), whereas phase damping never does.  The remaining subsections derive
the corresponding thresholds in closed form.

\subsection{Depolarizing: Critical Threshold}

Setting $\lambda_-=0$ in the depolarizing partial-transpose spectrum
(Appendix~\ref{app:calc}) gives
\begin{equation}
  p_c(\theta)=\frac{2\sin(2\theta)}{2\sin(2\theta)+1}.
  \label{eq:pc}
\end{equation}
Eq.~\eqref{eq:pc} is the depolarizing ESD threshold (Table~\ref{tab:results_summary}).

\begin{theorem}[Depolarizing ESD]
\label{thm:ESD_dep}
For every $\theta\in(0,\pi/2)$, $p_c(\theta)\in(0,1)$ and
$\Ndep(\theta,p)>0\Leftrightarrow p<p_c(\theta)$.
\end{theorem}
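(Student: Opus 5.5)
The plan is to work directly from the closed form~\eqref{eq:Ndep}, which was obtained from the partial-transpose spectrum $p/4\pm\tfrac12(1-p)\sin(2\theta)$ of the central $\{\ket{01},\ket{10}\}$ block. Write $s=\sin(2\theta)$. For $\theta\in(0,\pi/2)$ we have $s\in(0,1]$. Since $\Ndep=\max[0,h(p)]$ with
\begin{equation}
  h(p)=\tfrac{1}{2}(1-p)s-\tfrac{p}{4},
\end{equation}
the statement $\Ndep>0$ is equivalent to $h(p)>0$. I will also check that no other eigenvalue of $\rho_{\mathrm{dep}}^{T_B}$ can be negative. The outer diagonal entries $\alpha,\beta$ are nonnegative for $p\in[0,1]$, and the larger central eigenvalue $p/4+\tfrac12(1-p)s$ is nonnegative. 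So the negativity is exactly $\max[0,-\lambda_-]$, and the max-form in~\eqref{eq:Ndep} is justified.

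The main step is to note that $h$ is affine in $p$, with slope $-\tfrac{s}{2}-\tfrac14<0$. It is therefore strictly decreasing on $[0,1]$. Solving $h(p)=0$ gives $2(1-p)s=p$, that is, $p=2s/(2s+1)$. This is exactly $p_c(\theta)$ of~\eqref{eq:pc}. Strict monotonicity then yields $h(p)>0$ if and only if $p<p_c(\theta)$, which is the claimed equivalence.

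It remains to show $p_c(\theta)\in(0,1)$. Positivity follows from $s>0$. The bound $p_c<1$ follows from $2s<2s+1$. Equivalently, $h(0)=s/2>0$ and $h(1)=-1/4<0$, so the unique root lies strictly inside the interval. I also plan to remark that $p_c=1-1/(2s+1)$ is increasing in $s$, so more coherent inputs are more robust, though this is not needed for the theorem.

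I expect no real obstacle. The only point requiring care is the one treated first: confirming that $\lambda_-$ is the sole possibly-negative eigenvalue. That guarantees the sign of $h$ fully characterises entanglement, using the two-qubit equivalence of $\mathcal{N}>0$ with entanglement stated in Section~\ref{sec:framework}.
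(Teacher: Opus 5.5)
Your proposal is correct and follows the paper's route. The paper's proof likewise reads the equivalence directly off the closed form~\eqref{eq:Ndep} and obtains $p_c(\theta)\in(0,1)$ from $\sin(2\theta)>0$; you additionally make explicit two steps the paper leaves implicit, namely the strict decrease of the affine function $h(p)$ with its unique root at $p_c(\theta)$, and the check that $\lambda_-$ is the only eigenvalue of $\rho_{\mathrm{dep}}^{T_B}$ that can be negative.
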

\begin{proof}
$\sin(2\theta)>0$ for $\theta\in(0,\pi/2)$ implies $p_c\in(0,1)$.  The
result follows from Eq.~\eqref{eq:Ndep}.
\end{proof}

The threshold $p_c$ is monotonically increasing in $\sin(2\theta)$: more
coherent inputs survive to larger depolarizing strengths.  The maximum
$p_c(\pi/4)=2/3$ corresponds to the known separability boundary of the
depolarized Bell-state mixture.

\subsection{Amplitude Damping: Critical Threshold}

Setting $\lambda_-=0$ in Eq.~\eqref{eq:evals_AD}:
$\gamma_c\sin^2\!\theta=\tfrac{1}{2}\sin(2\theta)=\sin\theta\cos\theta$, giving
\begin{equation}
  \gamma_c(\theta)=\cot\theta.
  \label{eq:gc}
\end{equation}
This is physical ($\gamma_c<1$) only when $\cot\theta<1$, i.e., $\theta>\pi/4$.

\begin{theorem}[Amplitude-Damping ESD]
\label{thm:ESD_AD}
Entanglement sudden death under independent amplitude damping occurs if and only
if $\theta>\pi/4$, with threshold $\gamma_c(\theta)=\cot\theta$.
For $\theta\leq\pi/4$, the negativity is positive for all $\gamma\in[0,1)$.
\end{theorem}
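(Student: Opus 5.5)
The plan is to work directly from the closed form \eqref{eq:Nad}, which expresses the negativity through the single negative partial-transpose eigenvalue \eqref{eq:evals_AD}. Factoring as in the derivation of $\lambda_-$, I would write
\begin{equation*}
\Nad(\theta,\gamma)=\max\!\left[0,\,(1-\gamma)\sin\theta\,(\cos\theta-\gamma\sin\theta)\right].
\end{equation*}
For $\theta\in(0,\pi/2)$ and $\gamma\in[0,1)$ the prefactor $(1-\gamma)\sin\theta$ is strictly positive. Hence the sign of the argument, and therefore whether $\Nad>0$, is controlled entirely by the linear factor $h(\gamma)=\cos\theta-\gamma\sin\theta$. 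Since $\sin\theta>0$, $h$ is strictly decreasing in $\gamma$ and vanishes only at $\gamma=\cot\theta$. This gives the exact equivalence $\Nad(\theta,\gamma)>0\Leftrightarrow\gamma<\cot\theta$ on $[0,1)$.

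Next I split into cases on $\theta$, interpreting ``sudden death'' as the negativity vanishing at some finite noise strength $\gamma<1$.

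\textbf{Case $\theta>\pi/4$.} Here $\cot\theta\in(0,1)$. The negativity is positive for $\gamma<\cot\theta$ and identically zero for $\cot\theta\le\gamma<1$. This is genuine sudden death at the threshold $\gamma_c(\theta)=\cot\theta$ of Eq.~\eqref{eq:gc}.

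\textbf{Case $\theta\le\pi/4$.} Here $\cot\theta\ge1>\gamma$ for every $\gamma\in[0,1)$, so $\Nad>0$ throughout that interval. At the boundary point $\gamma=1$ the negativity vanishes only through the factor $(1-\gamma)$, which is asymptotic decay rather than sudden death. The borderline angle $\theta=\pi/4$ needs an explicit check: there $\cot\theta=1$, so the threshold coincides with the trivial endpoint and is excluded by the strict inequality $\gamma<1$. This establishes the ``only if'' direction.

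The edge angles $\theta=0,\pi/2$ have $\mathcal{N}_0=0$ and carry no entanglement to lose, so I would state that the theorem is understood on the open interval $(0,\pi/2)$, as in Theorem~\ref{thm:ESD_dep}. This matters because at $\theta=\pi/2$ the literal condition $\theta>\pi/4$ holds but $\cot\theta=0$ and the state is never entangled.

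The main obstacle is not algebraic but definitional. I must make precise that sudden death means vanishing at some $\gamma<1$, so that the $\theta=\pi/4$ boundary and the $\gamma\to1$ endpoint are classified correctly. I would state this convention explicitly at the start of the proof.
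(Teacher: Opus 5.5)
Your proposal is correct and follows the paper's own proof. Both arguments reduce the question to the sign of $\lambda_-=(1-\gamma)\sin\theta\,(\gamma\sin\theta-\cos\theta)$, i.e.\ to the condition $\gamma<\cot\theta$, and then compare $\cot\theta$ with $1$. Your additional care makes explicit several assumptions the paper's short proof leaves implicit: that the prefactor $(1-\gamma)\sin\theta$ must be strictly positive, that at $\gamma=1$ the negativity only vanishes asymptotically, and that the statement must be read on $\theta\in(0,\pi/2)$, since at $\theta=0$ the positivity clause fails and at $\theta=\pi/2$ the state is never entangled.
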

\begin{proof}
$\lambda_-<0\Leftrightarrow\gamma<\cot\theta$.  For $\theta\leq\pi/4$,
$\cot\theta\geq1$, so $\gamma<\cot\theta$ for all $\gamma\in[0,1)$.  For
$\theta>\pi/4$, $\cot\theta\in(0,1)$ gives a finite threshold.
\end{proof}

\subsection{Comparative Summary}

\begin{table*}[t]
\caption{Entanglement sudden death across the four post-gate noise channels.
  All four channels are compared with respect to the existence of a
  finite-noise sudden-death threshold and the robustness ordering of inputs.
  The orderings for depolarizing and amplitude damping are reversed.}
\label{tab:ESD}
\begin{ruledtabular}
\begin{tabular}{llll}
Channel & Sudden death? & Threshold & Robustness ordering\\
\hline
Phase damping  & Never ($p<1$) & --- & --- \\[2pt]
Global depolarizing & All $\theta\in(0,\tfrac{\pi}{2})$ &
  $\displaystyle p_c(\theta)=\frac{2\sin(2\theta)}{2\sin(2\theta)+1}$ &
  More coherent ($\uparrow\sin2\theta$) $\Rightarrow$ more robust \\[8pt]
Local depolarizing & All $\theta\in(0,\tfrac{\pi}{2})$ &
  $\displaystyle p_c^{\mathrm{loc}}(\theta)=1-\frac{1}{\sqrt{2\sin(2\theta)+1}}$ &
  More coherent $\Rightarrow$ more robust ($p_c^{\mathrm{loc}}<p_c$) \\[8pt]
Amplitude damping & Only $\theta>\tfrac{\pi}{4}$ &
  $\gamma_c(\theta)=\cot\theta$ &
  More excited ($\uparrow\sin\theta$) $\Rightarrow$ more fragile \\
\end{tabular}
\end{ruledtabular}
\end{table*}

The reversed robustness ordering is the key qualitative finding, summarized for all
four channels in Table~\ref{tab:ESD}.  Under global
depolarizing, maximally coherent inputs ($\theta=\pi/4$) are most robust with
$p_c=2/3$.  Under amplitude damping, exactly these inputs are the boundary case
($\gamma_c=1$, no finite sudden death), while inputs with $\theta>\pi/4$ (more
excited-state weight) are \emph{more fragile}: a larger $\sin^2\!\theta$ means
stronger population transfer, pushing $\lambda_-$ upward faster.

\begin{figure}[t]
  \centering
\includegraphics[width=0.95\columnwidth]{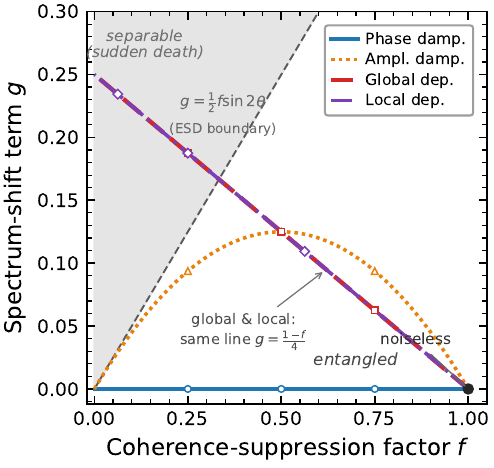}
  \caption{Geometry of the unified master relation~\eqref{eq:master} in the
    $(f,g)$ plane at $\theta=\pi/4$, where $f$ is the coherence-suppression factor
    and $g$ the spectrum-shift term (Table~\ref{tab:fg}).  Each channel traces a
    trajectory as its noise parameter runs from $0$ (the common noiseless point
    $(1,0)$, lower right) to $1$; open symbols mark $x=0.25,0.5,0.75$ and reveal the
    different spacing of noise along each path.  The dashed diagonal is the
    sudden-death boundary $g=\tfrac{1}{2}f\sin2\theta$: trajectories in the lower
    (white) region carry entanglement, while the shaded region is separable.  Phase
    damping ($g\equiv0$) runs along the horizontal axis and never crosses into the
    separable region, so it exhibits no sudden death.  Amplitude damping is the only
    channel with a curved (arched) trajectory.  Global and local depolarizing
    \emph{share the same straight trajectory} $g=(1-f)/4$---a non-obvious
    consequence of $p(2-p)=1-(1-p)^2$---and differ only in how rapidly they advance
    along it, local depolarizing reaching the boundary at smaller noise.}
  \label{fig:fgplane}
\end{figure}

\section{Entanglement Survival: Four-Channel Comparison}
\label{sec:efficiency}

The entanglement-survival fraction is defined as
\begin{equation}
  \eta(\theta,p)=\max\!\left[0,\,\frac{\mathcal{N}(\theta,p)}{\mathcal{N}_0(\theta)}\right],
  \quad\theta\in(0,\pi/2),
  \label{eq:eta_def}
\end{equation}
where $\theta\in(0,\pi/2)$ is required for $\mathcal{N}_0>0$.  At the endpoints
$\theta=0,\pi/2$ the input is incoherent and $\eta$ is undefined.  The explicit
$\max[0,\cdot]$ ensures $\eta\in[0,1]$: once entanglement sudden death occurs the survival fraction is zero, not negative.  We emphasize that $\eta$ measures the
\emph{fraction of the noiseless entanglement that survives the channel}, normalized
to the ideal output $\mathcal{N}_0$ at the same input angle; it is a normalized
entanglement-survival ratio rather than a rate of coherence consumption, since the
input coherence is fully expended by the ideal CNOT before the noise acts.

Substituting the four negativity expressions:
\begin{align}
  \eta_{\mathrm{phase}}(p)&=1-p,\label{eq:eta_phase}\\[4pt]
  \eta_{\mathrm{dep}}(\theta,p)&=\max\!\left[0,\,(1-p)-\frac{p}{2\sin(2\theta)}\right],
    \label{eq:eta_dep}\\[4pt]
  \eta_{\mathrm{AD}}(\theta,\gamma)&=\max\!\left[0,\,(1-\gamma)(1-\gamma\tan\theta)\right],
    \label{eq:eta_AD}\\[4pt]
  \eta_{\mathrm{loc}}(\theta,p)&=\max\!\left[0,\,(1-p)^2-\frac{p(2-p)}{2\sin(2\theta)}\right].
    \label{eq:eta_loc_xi}
\end{align}

\textbf{Phase damping:} $\eta_{\mathrm{phase}}=1-p$ is universal and
\emph{independent of $\theta$}.  Dephasing reduces the survival fraction
uniformly regardless of input coherence.

\textbf{Global depolarizing:} $\eta_{\mathrm{dep}}$ depends on $\theta$ through
$\sin(2\theta)$.  More coherent inputs retain a larger fraction of the ideal
negativity for fixed $p$.

\textbf{Amplitude damping:} $\eta_{\mathrm{AD}}$ depends on $\theta$ through
$\tan\theta$.  For $\theta<\pi/4$ ($\tan\theta<1$), the factor $(1-\gamma\tan\theta)>0$
for all $\gamma\in[0,1)$: no sudden death.  For $\theta>\pi/4$ ($\tan\theta>1$),
the population-transfer penalty drives $\eta_{\mathrm{AD}}$ to zero at
$\gamma_c=\cot\theta$.

\textbf{Local depolarizing:} $\eta_{\mathrm{loc}}$ has the strongest degradation,
combining the quadratic coherence factor $(1-p)^2$ with the enhanced mixing
penalty $p(2-p)/[2\sin(2\theta)]$.  It vanishes at the lowest threshold of all
four channels, $p_c^{\mathrm{loc}}(\theta)<p_c(\theta)$ for every $\theta$.

\textbf{Direct comparison at $\theta=\pi/4$.}
Plotting negativity and survival fraction directly against the physical noise
parameter (Fig.~\ref{fig:three_channel}) makes the distinct degradation behaviour
of each channel immediately visible.  At $\theta=\pi/4$ the four channels
order by fragility: local depolarizing decays fastest and reaches sudden
death first (at $p_c^{\mathrm{loc}}=1-1/\sqrt{3}\approx0.42$), followed by
global depolarizing (at $p_c=2/3$), while phase damping and amplitude damping
retain entanglement up to the maximal noise value.  The survival-fraction curves
$\eta(p)$ quantify how much of the noiselessly-converted entanglement each mechanism
preserves as the noise parameter increases:
\begin{itemize}
  \item Phase damping degrades \emph{linearly}, $\eta_{\mathrm{phase}}=1-p$,
    with no sudden death.
  \item Global depolarizing degrades linearly until the isotropic admixture
    forces sudden death at $p_c=2/3$.
  \item Amplitude damping degrades with initial slope $-2$ at $\theta=\pi/4$
    ($\eta_{\mathrm{AD}}=(1-\gamma)^2=1-2\gamma+\gamma^2$), steeper than phase and
    global damping per unit noise; the $\gamma^2$ term is the only second-order
    contribution.
  \item Local depolarizing degrades fastest, reflecting the quadratic
    coherence suppression $(1-p)^2$ combined with the larger mixing term.
\end{itemize}

\textbf{Comparison at $\theta=\pi/4$:}
\begin{align}
  \eta_{\mathrm{phase}}(p)&=1-p,\label{eq:compare_ph}\\[2pt]
  \eta_{\mathrm{dep}}\!\left(\tfrac{\pi}{4},p\right)
    &=\max\!\left[0,1-\tfrac{3}{2}p\right],\label{eq:compare_dep}\\[2pt]
  \eta_{\mathrm{AD}}\!\left(\tfrac{\pi}{4},\gamma\right)&=(1-\gamma)^2.\label{eq:compare_AD}
\end{align}
For small noise: $\eta_{\mathrm{phase}}\approx1-p$,
$\eta_{\mathrm{dep}}\approx1-\tfrac{3}{2}p$,
$\eta_{\mathrm{AD}}\approx1-2\gamma$.  Amplitude damping is twice as harmful as
phase damping per unit noise parameter; global depolarizing is $1.5\times$ as harmful
at $\theta=\pi/4$.

\textbf{Initial-slope signature.}
The four channels have \emph{distinct} initial survival-fraction slopes at $\theta=\pi/4$,
$\mathrm{d}\eta/\mathrm{d}x|_{x=0}=-1,\,-\tfrac{3}{2},\,-2,\,-3$ for phase, global,
amplitude, and local respectively (with $\eta_{\mathrm{loc}}=1-3p+\tfrac{3}{2}p^2$).
Because these four values are all different, the dominant post-gate noise mechanism
can in principle be identified from the leading-order decay of the conversion
survival fraction measured at small noise---a simple diagnostic that requires only the
slope of $\eta(x)$ near $x=0$ rather than the full curve.

\textbf{A noise-strength-dependent robustness crossing.}
The ordering by robustness is not fixed across the full noise range.  At
$\theta=\pi/4$ phase damping is always the most robust and local depolarizing
always the most fragile, but the \emph{relative} ordering of global depolarizing and
amplitude damping depends on noise strength.  Because the two channels are
controlled by different physical parameters---the depolarizing admixture $p$ and the
relaxation probability $\gamma$---this comparison is only well defined once a common
scale is chosen.  Comparing them at equal numerical noise strength $x=p=\gamma$, the
survival fractions of Eqs.~\eqref{eq:compare_dep}--\eqref{eq:compare_AD} satisfy
$1-\tfrac{3}{2}x=(1-x)^2$ at
\begin{equation}
  x_\times=\tfrac{1}{2}.
  \label{eq:crossing}
\end{equation}
For $0<x<\tfrac{1}{2}$ one has $1-\tfrac{3}{2}x>(1-x)^2$, so global depolarizing
retains more entanglement; for $\tfrac{1}{2}<x<\tfrac{2}{3}$ the inequality reverses,
$(1-x)^2>1-\tfrac{3}{2}x$; and for $x\ge\tfrac{2}{3}$ global depolarizing has already
undergone sudden death ($\eta_{\mathrm{dep}}=0$) while amplitude damping remains
positive for $x<1$.  Both survival fractions have linear leading behaviour near the origin
($-\tfrac{3}{2}x$ and $-2x$ respectively, so amplitude damping falls faster
initially); the crossing reflects the milder global-depolarizing slope being
overtaken once that channel approaches its threshold, not any quadratic onset.  We
stress that $x_\times=\tfrac{1}{2}$ is a same-numerical-parameter statement only: a
hardware comparison requires mapping $p$ and $\gamma$ to common device quantities
(e.g.\ $F_{\mathrm{avg}}$, $T_1$, $T_\varphi$, or an elapsed time $t$), under which
the crossing location shifts and may disappear depending on the device timescales.

\section{Connection to Physical Parameters}
\label{sec:physical_params}

The phenomenological noise parameters connect to physical device quantities as
follows; the full derivations of each mapping are given in
Appendix~\ref{app:device}.

\textbf{Phase damping.}
For noise duration $t$ after the gate,
$p=1-e^{-2t/T_\varphi}$ where $T_\varphi$ is the pure dephasing time
defined by $1/T_2=1/(2T_1)+1/T_\varphi$; the factor of two in the exponent
follows from the $\sqrt{1-p}$ coherence suppression of the Kraus parametrization
(Appendix~\ref{app:device}).

\textbf{Amplitude damping.}
$\gamma=1-e^{-t/T_1}$ where $T_1$ is the energy relaxation time.

\textbf{Global depolarizing.}
For the channel $\mathcal{E}_{\mathrm{dep}}(\rho)=(1-p)\rho+p\,\mathbb{I}_4/4$ used
in our analytic formulas, the average gate fidelity is
$F_{\mathrm{avg}}=1-\tfrac{3}{4}p$ (Appendix~\ref{app:device}), so the depolarizing
parameter that enters the negativity is
\begin{equation}
  p=\frac{4}{3}\bigl(1-F_{\mathrm{avg}}\bigr)=\frac{4}{3}\,\varepsilon_{\mathrm{gate}},
  \label{eq:p_gate}
\end{equation}
where $\varepsilon_{\mathrm{gate}}=1-F_{\mathrm{avg}}$ is the average gate
infidelity.  We caution that a different and equally common convention reports the
depolarizing strength relative to the $d^2-1$ non-identity Pauli channels of
randomised benchmarking, for which $p_{\mathrm{RB}}=\tfrac{16}{15}(1-F_{\mathrm{avg}})$;
this $p_{\mathrm{RB}}$ is a distinct quantity and should not be substituted into the
formulas above.  A gate error rate $\varepsilon_{\mathrm{gate}}=0.5\%$ then gives
$p\approx6.7\times10^{-3}$.

\textbf{Numerical estimates for superconducting qubits.}
Typical values $T_1\approx100\,\mu\mathrm{s}$, $T_\varphi\approx80\,\mu\mathrm{s}$,
and post-gate noise duration $t\approx200\,\mathrm{ns}$ give
$\gamma\approx2\times10^{-3}$ and $p_{\mathrm{deph}}\approx5\times10^{-3}$.
These place both channels in the regime $p,\gamma\ll1$ where the expressions
are well approximated by their leading-order forms.
On other hardware platforms---for example photonic systems---the effective
noise parameters are set by the relevant loss and dephasing rates; our formulas
apply directly once those are mapped to the Kraus operators
in Eqs.~\eqref{eq:PD_Kraus}--\eqref{eq:AD_Kraus} (collected for all four channels
in Appendix~\ref{app:kraus}).
Practical noise suppression through environmental shielding or dynamical
decoupling corresponds directly to a reduction of the phenomenological
parameters $p$ and $\gamma$.

All analytic expressions in this paper are \emph{exact} (non-perturbative) in $p$
and $\gamma$, so they remain valid at arbitrary noise strengths.

\textbf{Limitations of the model.}
The present analysis assumes: (i)~an ideal CNOT gate; (ii)~two-qubit systems only;
(iii)~post-gate Markovian noise with no gate--noise temporal correlations;
(iv)~four one-parameter channels; (v)~no measurement error, readout noise,
leakage, or cross-talk.  These idealisations are explicit and define the scope of
the benchmark.  Real devices involve combinations of these effects, and the
present expressions serve as reference points rather than direct device predictions.

\section{Summary of Exact Analytic Results}
\label{sec:results_summary}

Table~\ref{tab:results_summary} collects the closed-form negativity, survival fraction, and sudden-death threshold for each of the four channels
in a unified format organised by channel.  The noise parameters satisfy
$p,\gamma\in[0,1]$ and the input coherence angle $\theta\in[0,\pi/2]$.
The survival fraction $\eta=\mathcal{N}/\mathcal{N}_0$ is defined for
$\theta\in(0,\pi/2)$; the amplitude-damping ESD threshold $\gamma_c=\cot\theta$
is physical ($<1$) only for $\theta>\pi/4$.

\begin{table*}[t]
\caption{%
  Closed-form analytic results for coherence-to-entanglement conversion in the
  CNOT protocol under the four post-gate noise channels, with the ideal baseline
  $\mathcal{N}_0(\theta)=\tfrac{1}{2}\sin(2\theta)=\tfrac{1}{2}C_{\ell_1}$.
  $p,\gamma\in[0,1]$, $\theta\in[0,\pi/2]$; survival fraction
  $\eta=\mathcal{N}/\mathcal{N}_0$ requires $\theta\in(0,\pi/2)$.  All negativities
  and survival fractions are clamped at zero; the AD threshold is physical only for
  $\theta>\pi/4$.%
}
\label{tab:results_summary}
\begin{ruledtabular}
\begin{tabular}{lccc}
Channel & Negativity $\mathcal{N}(\theta,x)$ & Survival fraction $\eta(\theta,x)$
  & ESD threshold \\[2pt]
\hline \\[-6pt]
Phase damping &
  $\dfrac{1-p}{2}\sin2\theta$ &
  $1-p$ &
  none \\[10pt]
Global depolarizing &
  $\dfrac{1-p}{2}\sin2\theta-\dfrac{p}{4}$ &
  $(1-p)-\dfrac{p}{2\sin2\theta}$ &
  $p_c=\dfrac{2\sin2\theta}{2\sin2\theta+1}$ \\[12pt]
Local depolarizing &
  $\dfrac{(1-p)^2}{2}\sin2\theta-\dfrac{p(2-p)}{4}$ &
  $(1-p)^2-\dfrac{p(2-p)}{2\sin2\theta}$ &
  $p_c^{\mathrm{loc}}=1-\dfrac{1}{\sqrt{2\sin2\theta+1}}$ \\[12pt]
Amplitude damping &
  $(1-\gamma)\!\left(\dfrac{\sin2\theta}{2}-\gamma\sin^2\!\theta\right)$ &
  $(1-\gamma)(1-\gamma\tan\theta)$ &
  $\gamma_c=\cot\theta$ \\[6pt]
\end{tabular}
\end{ruledtabular}
\end{table*}

\section{Discussion}
\label{sec:discussion}

The results provide a compact four-channel analytic picture of how coherence-to-entanglement
conversion is affected by distinct post-gate noise mechanisms.

\textbf{Spectral mechanism.}
All four channels act on the same $X$-state partial-transpose structure through the
single competition of Eq.~\eqref{eq:master}: a coherence-suppression factor $f$ on
the entanglement-carrying off-diagonal against an additive spectrum shift $g$
(Table~\ref{tab:fg}).  The qualitative differences follow directly from the pair
$(f,g)$: phase damping has $g=0$ and never reaches sudden death; global depolarizing
adds a constant shift $p/4$ and dies at a finite threshold for every input; local
depolarizing combines the stronger quadratic suppression $(1-p)^2$ with the same
shift, giving the strictly lower threshold
$p_c^{\mathrm{loc}}(\theta)<p_c^{\mathrm{dep}}(\theta)$ and sudden death for
\emph{all} $\theta$; and amplitude damping is the sole channel whose shift carries
an explicit $\sin^2\!\theta$, confining its sudden death to $\theta>\pi/4$.

\textbf{Reversed robustness ordering and ESD hierarchy.}
Among the two depolarizing channels, which share the same parameter $p$, the
comparison is direct: maximally coherent inputs ($\theta=\pi/4$) are most robust
under global depolarizing ($p_c^{\mathrm{dep}}=2/3$), and local depolarizing
produces a strictly lower threshold $p_c^{\mathrm{loc}}(\theta)<p_c^{\mathrm{dep}}(\theta)$
at every~$\theta$, because the quadratic off-diagonal suppression $(1-p)^2$ makes
it more damaging per unit $p$.
Amplitude damping is parametrized by a \emph{different} physical quantity $\gamma$,
so its threshold $\gamma_c=\cot\theta$ cannot be placed on the same numerical scale
as the depolarizing $p_c$ values; the contrast is qualitative rather than a direct
inequality.  Qualitatively, amplitude damping reverses the angular trend: maximally
coherent inputs are the boundary case ($\gamma_c=1$ at $\theta=\pi/4$, no finite
sudden death), while more excited inputs ($\theta>\pi/4$) become progressively
fragile---the \emph{opposite} of the depolarizing channels, where coherent inputs
are the most robust.  This sign reversal in the angular dependence, rather than any
cross-parameter inequality, is the essential distinction between the
population-conserving (depolarizing) and population-transferring (amplitude-damping)
mechanisms.

To compare all four channels on a common physical footing, one maps each parameter
to a single elapsed time $t$ via the device relations of
Section~\ref{sec:physical_params} ($p=1-e^{-2t/T_\varphi}$, $\gamma=1-e^{-t/T_1}$).
At fixed $t$ and $\theta=\pi/4$, ordering the channels by the noise duration at
which entanglement vanishes reproduces the fragility sequence visible in
Fig.~\ref{fig:three_channel}: local depolarizing dies first, then global
depolarizing, while phase damping and (at this angle) amplitude damping retain
entanglement until the noise parameter saturates.

\textbf{Degradation profiles.}
Plotted against the physical noise parameter (Fig.~\ref{fig:three_channel}),
the four channels show clearly distinct decay shapes at $\theta=\pi/4$.
Phase damping and global depolarizing both begin with a linear decay in $p$;
phase damping continues linearly to $p=1$ with no sudden death, while global
depolarizing is driven to zero at $p_c=2/3$ by the isotropic identity admixture.
Amplitude damping has the steeper initial survival-fraction slope $-2$ at $\theta=\pi/4$,
$\eta_{\mathrm{AD}}=(1-\gamma)^2$; the population-transfer correction contributes
already at first order in $\gamma$, with only the $\gamma^2$ self-product entering
at second order.
Local depolarizing decays fastest of all, owing to the quadratic coherence
suppression $(1-p)^2$ together with the enhanced mixing term, reaching sudden
death at the lowest noise value $p_c^{\mathrm{loc}}\approx0.42$.

\textbf{Positioning relative to prior work.}
Streltsov et al.~\cite{streltsov2015measuring} established the general principle
that coherence can be converted to entanglement via incoherent operations.
Prior work on coherence consumption in quantum
algorithms~\cite{naseri2022entanglement,shi2017coherence,feng2023coherence,
liu2019coherence,ahnefeld2022coherence,ye2026coherence} studies coherence as
a computational resource rather than as an input to be converted into entanglement.
The entanglement-sudden-death phenomenon~\cite{yu2006quantum} was identified in
a general context; the threshold formula $p_c(\theta)$ derived here is a
coherence-input-dependent specialisation not previously given for this conversion
protocol.

Noise suppression strategies---such as environmental shielding or dynamical
decoupling---translate directly into reductions of the phenomenological
parameters $p$ and $\gamma$ in the expressions above.

\textbf{Scope of the unified structure.}
The master form~\eqref{eq:master} applies to post-gate channels that preserve the
restricted block structure of the CNOT output---that is, channels under which the
entanglement remains confined to the $\{\ket{00},\ket{11}\}$ sector while the
$\{\ket{01},\ket{10}\}$ sector acquires only incoherent populations.  The four
channels studied here all satisfy this condition.  Channels that generate coherence
in the $\{\ket{01},\ket{10}\}$ block---for instance independent bit-flip noise,
which populates $\ket{01}\!\bra{10}$---fall outside the single-pair $(f,g)$
description and would require the full four-by-four partial-transpose spectrum.  The
unified picture is therefore a statement about this physically natural family of
population-preserving and dephasing channels, not about arbitrary Kraus maps.

\textbf{An open question: optimality of the conversion ratio.}
The noiseless identity $\mathcal{N}_0=\tfrac{1}{2}C_{\ell_1}$ fixes the
coherence-to-entanglement ratio for the CNOT protocol at exactly $\tfrac{1}{2}$.
Whether this ratio is \emph{optimal}---that is, whether $\tfrac{1}{2}$ is the
largest output negativity obtainable per unit input coherence under any incoherent
two-qubit operation, or whether a different entangling gate or measurement-assisted
protocol could exceed it---is not addressed here and remains open.  Establishing
$\tfrac{1}{2}$ as a tight bound, or exhibiting a protocol that surpasses it, would
turn the present baseline identity into an operational limit on coherence-to-
entanglement conversion.  We regard this as the most natural theoretical extension
of the present results.

\section{Conclusion}
\label{sec:conclusion}

We have derived closed-form analytic benchmarks for coherence-to-entanglement
entanglement-survival fraction in a minimal two-qubit CNOT protocol under four post-gate
noise channels.  The main findings are:
\begin{enumerate}
  \item \textbf{Noiseless relation.} The CNOT converts input $\ell_1$-norm coherence to
    output negativity with $\mathcal{N}_0=\tfrac{1}{2}C_{\ell_1}$.
  \item \textbf{Phase damping} produces a universal, coherence-independent survival fraction
    $\eta_{\mathrm{phase}}=1-p$ and no entanglement sudden death.
  \item \textbf{Global depolarizing} produces coherence-dependent survival fraction and
    entanglement sudden death at $p_c(\theta)=2\sin(2\theta)/[2\sin(2\theta)+1]$
    for all $\theta\in(0,\pi/2)$.  More coherent inputs are more robust.
  \item \textbf{Amplitude damping} produces a population-transfer correction and
    sudden death only for $\theta>\pi/4$ at $\gamma_c=\cot\theta$.  More excited
    inputs are more fragile---the \emph{opposite} robustness ordering from
    depolarizing.
  \item \textbf{Local single-qubit depolarizing} yields exact negativity
    $\mathcal{N}_{\mathrm{loc}}=\max[0,(1-p)^2\sin(2\theta)/2-p(2-p)/4]$
    and ESD threshold $p_c^{\mathrm{loc}}(\theta)=1-1/\!\sqrt{2\sin(2\theta)+1}$
    for \emph{all} $\theta\in(0,\pi/2)$, with
    $p_c^{\mathrm{loc}}<p_c^{\mathrm{dep}}$ at every $\theta$.
  \item \textbf{Fragility ordering and a robustness crossing.}  At $\theta=\pi/4$
    phase damping is always the most robust and local depolarizing the most fragile
    (sudden death at $p\approx0.42$, versus $p_c=2/3$ for global depolarizing).  The
    relative ordering of global depolarizing and amplitude damping is not fixed:
    compared at equal numerical noise strength $x=p=\gamma$, their survival fractions cross
    at $x=1/2$, with global depolarizing more robust below this value and amplitude
    damping above it.  Since $p$ and $\gamma$ are distinct physical parameters, a
    hardware comparison requires mapping both to common device quantities, under
    which the crossing location shifts.  The four distinct initial survival-fraction slopes
    $(-1,-\tfrac{3}{2},-2,-3)$ further allow the dominant noise mechanism to be
    identified from the small-noise decay alone.
  \item \textbf{Physical mapping.} The parameters satisfy
    $p=1-e^{-2t/T_\varphi}$, $\gamma=1-e^{-t/T_1}$, and
    $p=\tfrac{4}{3}(1-F_{\mathrm{avg}})$, connecting the benchmarks to measurable
    device quantities such as for superconducting qubit.
\end{enumerate}

\textbf{Limitations.}
The present analysis assumes an ideal CNOT gate; a two-qubit system only; post-gate
Markovian noise; four single-parameter channels; and no measurement error, readout
noise, leakage, cross-talk, or control errors.  These idealisations are explicit;
the closed-form expressions serve as analytic reference points rather than
hardware predictions.

\textbf{Future directions.}
Natural extensions include: noisy CNOT gates;
correlated and non-Markovian dephasing; amplitude-phase mixed channels;
multi-qubit CNOT networks; non-Markovian dynamics; hardware-calibrated noise
models using the parameter mappings of Section~\ref{sec:physical_params};
and protocols beyond the minimal two-qubit setting.

\appendix

\section{Supporting Calculations}
\label{app:calc}

\subsection{Noiseless CNOT Output}

$U_{\mathrm{CNOT}}\ket{a,b}=\ket{a,a\oplus b}$.  Applying to
$\ket{\psi_A}\otimes\ket{0}$ gives $\ket{\Psi}$ as in Eq.~\eqref{eq:Psi}
with density matrix having entries $a=\cos^2\theta$, $b=\sin^2\theta$,
$c=\tfrac{1}{2}e^{-i\phi}\sin(2\theta)$.  Proposition~\ref{prop:CE} then follows
from $\mathcal{N}(\rho_X)=|c|=\tfrac{1}{2}\sin(2\theta)=\tfrac{1}{2}C_{\ell_1}$.

\subsection{Proof of Proposition~\ref{prop:Xstate} and Phase-Damping Spectrum}

For the restricted $X$-state~\eqref{eq:rho_Xstate}, the partial transpose
$\rho_X^{T_B}$ is block-diagonal with a $2\times2$ block
$\bigl(\begin{smallmatrix}0&c'\\c'^*&0\end{smallmatrix}\bigr)$
and $1\times1$ blocks $a'$, $b'$.  The $2\times2$ block has eigenvalues $\pm|c'|$;
the unique negative eigenvalue gives $\mathcal{N}=|c'|$.  For phase damping,
$c'=(1-p)c$ so $\Nph=\tfrac{1}{2}(1-p)\sin(2\theta)$.

\subsection{Equivalence of Negativity and Concurrence ($C=2\mathcal{N}$)}
\label{app:CeqN}

Every state produced in this work---the noiseless output~\eqref{eq:rho_Xstate} and
the four noisy outputs---has the general $X$-state form
$\rho=\mathrm{diag}(a,f,f,b)+c\ket{00}\!\bra{11}+c^*\ket{11}\!\bra{00}$,
in which the only off-diagonal coherence lies in the $\{\ket{00},\ket{11}\}$ block
and the $\{\ket{01},\ket{10}\}$ block carries equal incoherent populations $f$
(with $f=0$ for phase damping and $f=p/4$, $\gamma(1-\gamma)\sin^2\!\theta$,
$p(2-p)/4$ for global depolarizing, amplitude damping, and local depolarizing,
respectively).  For such an $X$-state the Wootters
concurrence~\cite{wootters1998entanglement} evaluates to
\begin{equation}
  C=2\max\!\bigl\{|c|-f,\;-\sqrt{ab}\bigr\}=2\max\bigl(|c|-f,\,0\bigr),
  \label{eq:concurrence}
\end{equation}
the second alternative being non-positive.  The negativity of the same state is
$\mathcal{N}=\max(|c|-f,0)$ from the $\{\ket{01},\ket{10}\}$ partial-transpose
block.  Hence
\begin{equation}
  C=2\mathcal{N}
  \label{eq:CeqN}
\end{equation}
holds \emph{exactly} for the entire family, independently of the channel and of the
noise strength.  Entanglement is therefore carried entirely by the
$\{\ket{00},\ket{11}\}$ coherence, and the two measures coincide up to the constant
factor of two.  Every sudden-death threshold and robustness ordering in this paper
is consequently measure-independent: it is identical whether quantified by
negativity or concurrence.

\subsection{Depolarizing Partial-Transpose Spectrum}

The depolarizing output has anti-diagonal coherence
$c'=\tfrac{1}{2}(1-p)\sin(2\theta)$ (the input coherence suppressed by $(1-p)$)
and diagonal entries shifted by $p/4$.  The $2\times2$ block in the
$\{\ket{01},\ket{10}\}$ subspace of the partial transpose is
$\bigl(\begin{smallmatrix}p/4&c'\\c'^*&p/4\end{smallmatrix}\bigr)$
with eigenvalues $p/4\pm\tfrac{1}{2}(1-p)\sin(2\theta)$, giving the negative
eigenvalue $\lambda_-=p/4-\tfrac{1}{2}(1-p)\sin(2\theta)$ and
$\Ndep=\max[0,-\lambda_-]$.

\subsection{Amplitude Damping: Derivation of $\rho_{\mathrm{AD}}$}

With $A_{ij}=A_i\otimes A_j$, the four contributions
$(A_{ij})\rho(A_{ij})^\dagger$ to
$\rho_{\mathrm{AD}}=\sum_{i,j}(A_{ij})\rho(A_{ij})^\dagger$ are (writing the input
CNOT-output entries as $\cos^2\!\theta$, $\sin^2\!\theta$, and
$c=\tfrac{1}{2}e^{-i\phi}\sin(2\theta)$):
\begin{align*}
  (A_{00}):\ &
    \cos^2\!\theta\,\ket{00}\!\bra{00}+(1-\gamma)c\,\ket{00}\!\bra{11}\nonumber\\
    &\quad+(1-\gamma)c^*\ket{11}\!\bra{00}
    +(1-\gamma)^2\sin^2\!\theta\,\ket{11}\!\bra{11},\\
  (A_{01}):\ &\ \gamma(1-\gamma)\sin^2\!\theta\,\ket{10}\!\bra{10},\\
  (A_{10}):\ &\ \gamma(1-\gamma)\sin^2\!\theta\,\ket{01}\!\bra{01},\\
  (A_{11}):\ &\ \gamma^2\sin^2\!\theta\,\ket{00}\!\bra{00}.
\end{align*}
Summing, the $\ket{00}\!\bra{00}$ entry becomes
$a=\cos^2\!\theta+\gamma^2\sin^2\!\theta$, the $\ket{11}\!\bra{11}$ entry
$b=(1-\gamma)^2\sin^2\!\theta$, and the two central populations
$f=\gamma(1-\gamma)\sin^2\!\theta$, reproducing Eq.~\eqref{eq:rho_AD}.
The partial transpose moves $c$ into the $\{\ket{01},\ket{10}\}$ block, giving
eigenvalues $f\pm|c|=(1-\gamma)[\gamma\sin^2\!\theta\pm\tfrac{1}{2}\sin(2\theta)]$
as in Eq.~\eqref{eq:evals_AD}.

\subsection{Local Depolarizing: Derivation of $\rho_{\mathrm{loc}}$}

Each single-qubit depolarizing map sends
$\ket{0}\!\bra{0}\to(1-\tfrac{p}{2})\ket{0}\!\bra{0}+\tfrac{p}{2}\ket{1}\!\bra{1}$,
$\ket{1}\!\bra{1}\to\tfrac{p}{2}\ket{0}\!\bra{0}+(1-\tfrac{p}{2})\ket{1}\!\bra{1}$,
and $\ket{0}\!\bra{1}\to(1-p)\ket{0}\!\bra{1}$.
Applying $\mathcal{E}_p\otimes\mathcal{E}_p$ to the CNOT
output~\eqref{eq:rho_Xstate}, the coherence acquires a factor $(1-p)^2$ on each
qubit's off-diagonal, giving anti-diagonal element $(1-p)^2c$.  The
$\ket{00}\!\bra{00}$ population receives $(1-\tfrac{p}{2})^2$ from $\cos^2\!\theta$
and $(\tfrac{p}{2})^2$ from $\sin^2\!\theta$, so
$a''=(1-\tfrac{p}{2})^2\cos^2\!\theta+(\tfrac{p}{2})^2\sin^2\!\theta$, and
symmetrically $b''=(\tfrac{p}{2})^2\cos^2\!\theta+(1-\tfrac{p}{2})^2\sin^2\!\theta$.
Each of the $\{\ket{01},\ket{10}\}$ populations collects
$\tfrac{p}{2}(1-\tfrac{p}{2})$ from both $\cos^2\!\theta$ and $\sin^2\!\theta$,
giving $\beta=\tfrac{p}{2}(1-\tfrac{p}{2})=p(2-p)/4$.  The partial-transpose
$\{\ket{01},\ket{10}\}$ block is
$\bigl(\begin{smallmatrix}\beta&(1-p)^2c\\(1-p)^2c^*&\beta\end{smallmatrix}\bigr)$
with negative eigenvalue $\lambda_-=\beta-(1-p)^2|c|$, reproducing
Eq.~\eqref{eq:Nloc}.  Setting $\lambda_-=0$ with $u=1-p$ and $s=\sin(2\theta)$
gives $u^2(2s+1)=1$, hence $p_c^{\mathrm{loc}}=1-1/\!\sqrt{2s+1}$
[Eq.~\eqref{eq:pc_loc}].

\section{Numerical Validation}
\label{app:numerics}

We verify all closed-form results (negativity, survival fraction, and threshold for each channel) by direct density-matrix simulation.
The simulated protocol is
\begin{equation}
  \rho_{\mathrm{out}}
  =\mathcal{E}_{\mathrm{noise}}\!\left(U_{\mathrm{CNOT}}\rho_0
    U_{\mathrm{CNOT}}^\dagger\right),
\end{equation}
where $\rho_0=\rho_A\otimes\ket{0}\!\bra{0}$.  Simulations are performed on a
uniform grid over $\theta\in[0.01,\pi/2]$ and $p,\gamma\in[0,1]$.
The maximum absolute deviation between numerical and analytic negativity is
\begin{equation}
  \Delta_{\max}=\max_{\theta,p}\left|\mathcal{N}_{\mathrm{num}}(\theta,p)
    -\mathcal{N}_{\mathrm{analytic}}(\theta,p)\right|<10^{-12}
  \label{eq:Delta_max}
\end{equation}
for all four channels, consistent with double-precision roundoff.

Figure~\ref{fig:CE_prop} confirms the noiseless relation
$\mathcal{N}_0=\tfrac{1}{2}C_{\ell_1}$ to machine precision.
Figure~\ref{fig:neg_vs_theta} shows the $\theta$-dependence of the negativity for
all four channels at three noise strengths.
Figure~\ref{fig:landscape} shows the four-channel two-dimensional negativity
landscapes $\mathcal{N}(\theta,\cdot)$: smooth contours with no
ESD for phase damping; coherence-dependent ESD boundary $p_c(\theta)$ for global depolarizing;
ESD boundary $\gamma_c=\cot\theta$ for $\theta>\pi/4$ for amplitude damping;
and the stricter boundary $p_c^{\mathrm{loc}}(\theta)$ for local depolarizing.
Figure~\ref{fig:three_channel} validates the consolidated four-channel negativity
and entanglement-survival fraction at $\theta=\pi/4$ against direct simulation; its lower
panels plot the absolute deviations $|\Delta\mathcal{N}|$ and $|\Delta\eta|$
directly, which remain at the $10^{-16}$ roundoff level across the full noise range
and thus display the bound~\eqref{eq:Delta_max} graphically.
Figure~\ref{fig:ESD} validates the sudden-death thresholds and angle-resolved survival fraction.
Figure~\ref{fig:fgplane} recasts the same four channels geometrically in the
$(f,g)$ plane of the master relation~\eqref{eq:master}.

\begin{figure}[t]
  \centering
  \includegraphics[width=0.82\columnwidth]{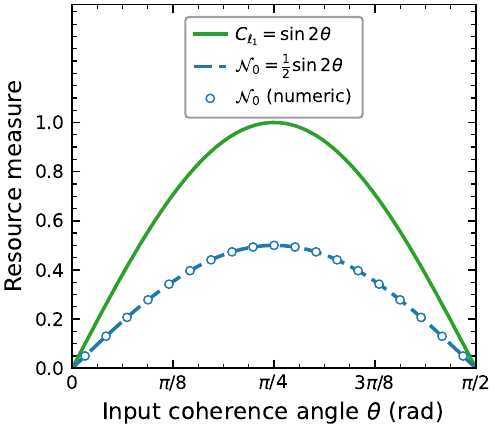}
  \caption{Ideal (noiseless) coherence-to-entanglement proportionality.
    The input $\ell_1$-norm coherence $C_{\ell_1}=\sin(2\theta)$ (green solid)
    and the post-gate negativity $\mathcal{N}_0=\tfrac{1}{2}\sin(2\theta)$
    (blue dashed) differ by exactly a factor of two.
    Open circles: numerically extracted $\mathcal{N}_0$, confirming
    $\mathcal{N}_0=\tfrac{1}{2}C_{\ell_1}$ to machine precision.
    This sets the baseline resource that the four noise channels degrade.}
  \label{fig:CE_prop}
\end{figure}

\begin{figure*}[t]
  \centering
  \includegraphics[width=0.96\textwidth]{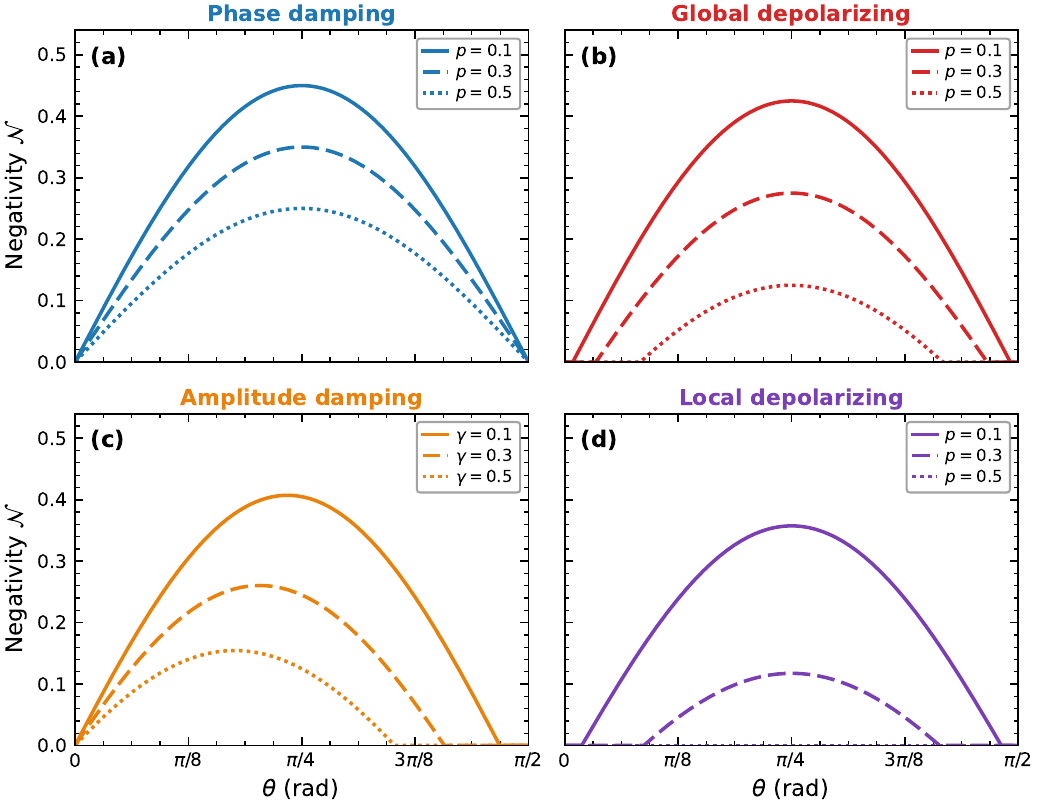}
  \caption{Negativity $\mathcal{N}(\theta,\cdot)$ versus input angle $\theta$
    for all four post-gate noise channels, one channel per panel and all panels
    on identical axes for direct comparison:
    (a) phase damping, (b) global depolarizing, (c) amplitude damping,
    (d) local single-qubit depolarizing.
    Within each panel the three curves correspond to noise level
    $0.1/0.3/0.5$ (solid/dashed/dotted); amplitude damping uses the decay
    probability $\gamma$, the other three use $p$.
    Phase damping simply rescales the $\sin(2\theta)$ envelope with no sudden
    death; global and local depolarizing develop a zero-negativity region that
    widens with noise; amplitude damping is fragile only for $\theta>\pi/4$.}
  \label{fig:neg_vs_theta}
\end{figure*}

\begin{figure*}[t]
  \centering
  \includegraphics[width=0.96\textwidth]{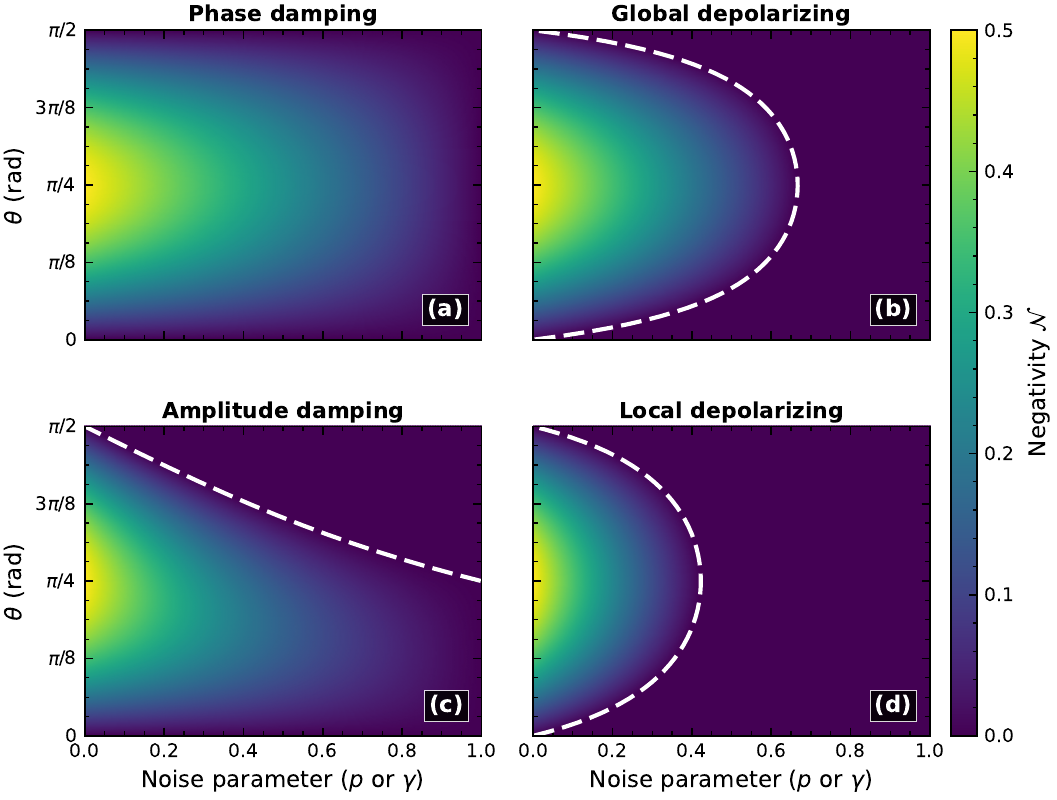}
  \caption{Two-dimensional negativity landscapes
    $\mathcal{N}(\theta,\cdot)$ over the full $(\theta,\,\text{noise})$ plane
    for all four channels on a shared colour scale:
    (a) phase damping, (b) global depolarizing, (c) amplitude damping,
    (d) local single-qubit depolarizing.
    White dashed curves mark the analytic entanglement-sudden-death boundary
    where one exists:
    $p_c(\theta)=2\sin(2\theta)/[2\sin(2\theta)+1]$ for global depolarizing,
    $\gamma_c(\theta)=\cot\theta$ (physical for $\theta>\pi/4$) for amplitude
    damping, and $p_c^{\mathrm{loc}}(\theta)=1-1/\!\sqrt{2\sin(2\theta)+1}$ for
    local depolarizing.
    Phase damping (a) has no boundary: the negativity decays multiplicatively
    as $(1-p)$ and never vanishes for $p<1$.
    The local-depolarizing ESD region (d) is the largest, reflecting that local
    noise is the most damaging per unit parameter.}
  \label{fig:landscape}
\end{figure*}

\begin{figure*}[t]
  \centering
  \includegraphics[width=0.96\textwidth]{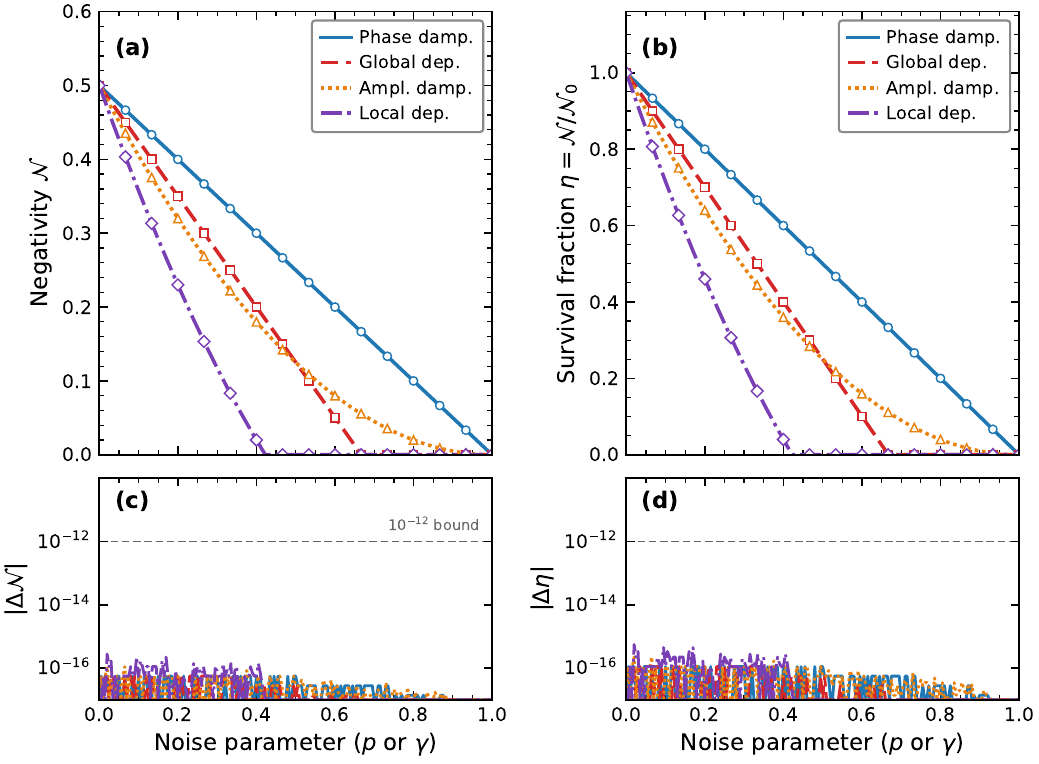}
  \caption{Consolidated head-to-head comparison of all four channels at
    $\theta=\pi/4$, plotted directly against the physical noise parameter
    ($p$ for phase, global, and local depolarizing; $\gamma$ for amplitude
    damping).
    (a) Negativity $\mathcal{N}$ and (b) entanglement-survival fraction
    $\eta=\mathcal{N}/\mathcal{N}_0$.
    All four channels start from $\mathcal{N}_0=\tfrac{1}{2}$ ($\eta=1$) and
    order by fragility: local depolarizing decays fastest and reaches sudden
    death first at $p_c^{\mathrm{loc}}\approx0.42$, followed by global
    depolarizing at $p_c=2/3$; phase damping decays linearly, reaching zero only
    at $p=1$; amplitude damping retains entanglement up to $\gamma=1$ at this
    angle.
    Curves are the closed-form expressions and open symbols the direct
    density-matrix simulation.
    (c,d) Absolute numerical deviations $|\Delta\mathcal{N}|$ and $|\Delta\eta|$
    between simulation and analytics on a logarithmic scale; for every channel
    these remain at the $10^{-16}$ level of double-precision roundoff, far below
    the quoted bound $\Delta_{\max}<10^{-12}$ (dashed reference line).
    Each channel keeps a fixed colour, line style, and marker throughout the
    paper.}
  \label{fig:three_channel}
\end{figure*}

\begin{figure*}[t]
  \centering
  \includegraphics[width=0.96\textwidth]{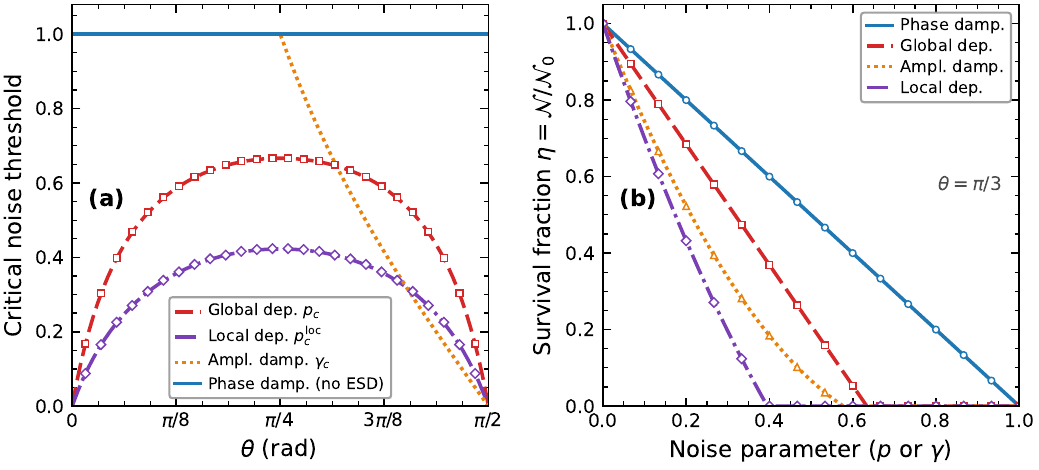}
  \caption{(a) Entanglement-sudden-death thresholds versus input angle $\theta$
    for all four channels on one axis.  Global depolarizing $p_c(\theta)$ and
    local depolarizing $p_c^{\mathrm{loc}}(\theta)$ both exist for every
    $\theta\in(0,\pi/2)$, with $p_c^{\mathrm{loc}}<p_c$ everywhere because
    $(1-p)^2<(1-p)$; amplitude damping $\gamma_c(\theta)=\cot\theta$ is physical
    only for $\theta>\pi/4$; phase damping has no finite threshold (flat line at
    the top, ``no ESD'').
    (b) Angle-resolved survival fraction $\eta=\mathcal{N}/\mathcal{N}_0$ versus noise
    parameter at the representative excited input $\theta=\pi/3$, where all four
    channels (including amplitude damping) exhibit sudden death; the ordering
    matches panel (a).
    Open symbols: numerically extracted values ($\Delta_{\max}<10^{-12}$).}
  \label{fig:ESD}
\end{figure*}

\section{Derivation of the $T_1$, $T_2$, and Fidelity Mappings}
\label{app:device}

This appendix derives the physical-parameter relations quoted in
Section~\ref{sec:physical_params}.

\textbf{Amplitude damping and $T_1$.}
Energy relaxation is governed by the Lindblad equation
$\dot\rho=\Gamma_1\bigl(\sigma_-\rho\sigma_+-\tfrac{1}{2}\{\sigma_+\sigma_-,\rho\}\bigr)$
with $\Gamma_1=1/T_1$ and $\sigma_-=\ket{0}\!\bra{1}$.  The excited-state
population obeys $\dot\rho_{11}=-\Gamma_1\rho_{11}$, so
$\rho_{11}(t)=\rho_{11}(0)\,e^{-t/T_1}$ and a fraction
$1-e^{-t/T_1}$ of any excitation decays in time $t$.  Identifying this decayed
fraction with the amplitude-damping Kraus parameter gives
\begin{equation}
  \gamma=1-e^{-t/T_1}.
\end{equation}
The coherences decay as $\rho_{01}(t)=\rho_{01}(0)\,e^{-t/2T_1}$, i.e.\ relaxation
alone contributes $1/(2T_1)$ to the total dephasing rate.

\textbf{Phase damping, $T_\varphi$, and $T_2$.}
Pure dephasing is described by
$\dot\rho=\tfrac{\Gamma_\varphi}{2}\bigl(\sigma_z\rho\sigma_z-\rho\bigr)$
with $\Gamma_\varphi=1/T_\varphi$, under which the off-diagonal element decays as
$\rho_{01}(t)=\rho_{01}(0)\,e^{-t/T_\varphi}$ while populations are unchanged.
The phase-damping Kraus operators of Appendix~\ref{app:kraus} suppress the
single-qubit coherence by the factor $\sqrt{1-p}$, so matching the Kraus
suppression to the physical coherence decay, $\sqrt{1-p}=e^{-t/T_\varphi}$, fixes
the noise parameter as
\begin{equation}
  p=1-e^{-2t/T_\varphi}.
  \label{eq:p_Tphi}
\end{equation}
(The factor of two in the exponent is a direct consequence of the Kraus
parametrization, in which the coherence is multiplied by $\sqrt{1-p}$ rather than
by $1-p$; with the alternative symmetric parametrization
$K_0=\sqrt{1-\tilde p/2}\,\mathbb{I}$, $K_1=\sqrt{\tilde p/2}\,\sigma_z$ one would
instead obtain $\tilde p=1-e^{-t/T_\varphi}$~\cite{krantz2019quantum}.  The two
conventions describe the same physical channel and the negativity results are
unchanged, since they depend only on the density-matrix coherence factor.)
Combining the relaxation contribution $1/(2T_1)$ with the pure-dephasing
contribution $1/T_\varphi$ gives the standard total-dephasing (transverse
relaxation) rate~\cite{krantz2019quantum}
\begin{equation}
  \frac{1}{T_2}=\frac{1}{2T_1}+\frac{1}{T_\varphi}.
\end{equation}

\textbf{Depolarizing and average gate fidelity.}
For a depolarizing channel on a $d$-dimensional system,
$\mathcal{E}(\rho)=(1-p)\rho+p\,\mathbb{I}_d/d$, the average gate fidelity with the
identity, obtained by integrating over the Haar
measure~\cite{nielsen2002simple}, is
\begin{equation}
  F_{\mathrm{avg}}=\int\!\mathrm{d}\psi\,
  \bra{\psi}\mathcal{E}(\ket{\psi}\!\bra{\psi})\ket{\psi}
  =1-p\,\frac{d-1}{d}.
\end{equation}
Solving for $p$ gives the average-gate-fidelity relation
\begin{equation}
  p=\frac{d}{d-1}\bigl(1-F_{\mathrm{avg}}\bigr)
   =\frac{4}{3}\bigl(1-F_{\mathrm{avg}}\bigr)\qquad(d=4).
  \label{eq:p_Favg}
\end{equation}
A different but equally common convention reports the depolarizing parameter
relative to the $d^2-1$ non-identity Pauli channels, as extracted directly from
randomised benchmarking~\cite{magesan2011scalable}; in that convention
\begin{equation}
\begin{split}
  p_{\mathrm{RB}}&=\frac{d^{2}}{d^{2}-1}\bigl(1-F_{\mathrm{avg}}\bigr)\\
   &=\frac{16}{15}\bigl(1-F_{\mathrm{avg}}\bigr)
    =\frac{16}{15}\,\varepsilon_{\mathrm{gate}} \quad(d=4).
\end{split}
  \label{eq:p_RB}
\end{equation}
Equations~\eqref{eq:p_Favg} and~\eqref{eq:p_RB} differ because they normalise the
error to the average gate fidelity and to the randomised-benchmarking Pauli rate
respectively.  The parameter appearing in our negativity formulas is the $p$ of
$\mathcal{E}_{\mathrm{dep}}(\rho)=(1-p)\rho+p\,\mathbb{I}_4/4$, whose
self-consistent fidelity relation is Eq.~\eqref{eq:p_Favg}, $p=\tfrac{4}{3}
(1-F_{\mathrm{avg}})$; this is the form quoted in Eq.~\eqref{eq:p_gate}.  The
randomised-benchmarking parameter $p_{\mathrm{RB}}$ of Eq.~\eqref{eq:p_RB} is a
distinct quantity and is not interchangeable with it.

\textbf{Local depolarizing and the per-qubit Pauli rate.}
A single-qubit depolarizing channel written in terms of an isotropic Pauli-error
probability $q$ (equal probability $q/3$ for each of $X,Y,Z$) matches the form in
Appendix~\ref{app:kraus} under $p/4=q/3$ for each non-identity Pauli weight, i.e.
\begin{equation}
  p=\frac{4q}{3}.
\end{equation}
Thus $q$ is directly accessible from single-qubit randomised benchmarking, and the
local-depolarizing results of Section~\ref{sec:loc} apply with this substitution.

\section{Kraus Operators for All Noise Channels}
\label{app:kraus}

For completeness we collect the single-qubit Kraus sets for all channels in one
place and give the two-qubit construction used throughout.  Each single-qubit
channel $\mathcal{E}(\rho)=\sum_k K_k\rho K_k^\dagger$ satisfies the completeness
relation $\sum_k K_k^\dagger K_k=\mathbb{I}_2$.

\textbf{Phase damping} (dephasing strength $p$):
\begin{equation}
  K_0^{\mathrm{pd}}=\begin{pmatrix}1&0\\0&\sqrt{1-p}\end{pmatrix},\qquad
  K_1^{\mathrm{pd}}=\begin{pmatrix}0&0\\0&\sqrt{p}\end{pmatrix},
\end{equation}
with $K_0^\dagger K_0+K_1^\dagger K_1=\mathrm{diag}(1,(1-p)+p)=\mathbb{I}_2$.

\textbf{Amplitude damping} (decay probability $\gamma$):
\begin{equation}
  K_0^{\mathrm{ad}}=\begin{pmatrix}1&0\\0&\sqrt{1-\gamma}\end{pmatrix},\qquad
  K_1^{\mathrm{ad}}=\begin{pmatrix}0&\sqrt{\gamma}\\0&0\end{pmatrix},
\end{equation}
with $K_0^\dagger K_0+K_1^\dagger K_1=\mathrm{diag}(1,(1-\gamma)+\gamma)=\mathbb{I}_2$.

\textbf{Single-qubit depolarizing} (per-qubit probability $p$), used to build the
\emph{local} channel of Section~\ref{sec:loc}:
\begin{equation}
\begin{split}
  &K_0^{\mathrm{dep}}=\sqrt{1-\tfrac{3p}{4}}\,\mathbb{I}_2,\\
  &K_1=\sqrt{\tfrac{p}{4}}\,X,\quad
  K_2=\sqrt{\tfrac{p}{4}}\,Y,\quad
  K_3=\sqrt{\tfrac{p}{4}}\,Z,
\end{split}
\end{equation}
where $X,Y,Z$ are the Pauli matrices.  Completeness follows from
$X^\dagger X=Y^\dagger Y=Z^\dagger Z=\mathbb{I}_2$:
$\sum_{k=0}^{3}K_k^\dagger K_k=(1-\tfrac{3p}{4})\mathbb{I}_2+3\cdot\tfrac{p}{4}\mathbb{I}_2
=\mathbb{I}_2$.

\textbf{Two-qubit construction.}
For the three Kraus channels above (phase damping, amplitude damping, local
depolarizing) the post-gate map is the independent product
$\mathcal{E}^{(2)}=\mathcal{E}\otimes\mathcal{E}$, with two-qubit Kraus operators
\begin{equation}
\begin{split}
  K_{ij}&=K_i\otimes K_j,\\
  \sum_{ij}K_{ij}^\dagger K_{ij}
  &=\Bigl(\sum_i K_i^\dagger K_i\Bigr)\otimes\Bigl(\sum_j K_j^\dagger K_j\Bigr)
  =\mathbb{I}_4 .
\end{split}
\end{equation}
Phase damping and amplitude damping each contribute $2\times2=4$ two-qubit Kraus
operators; local depolarizing contributes $4\times4=16$.

\textbf{Global depolarizing.}
The global two-qubit depolarizing channel is \emph{not} a product of single-qubit
maps.  In operator-sum form it admixes the maximally mixed state directly,
\begin{equation}
  \mathcal{E}_{\mathrm{dep}}(\rho)=(1-p)\rho+\frac{p}{4}\,\mathbb{I}_4
  =(1-p)\rho+\frac{p}{16}\sum_{\mu=0}^{15}P_\mu\,\rho\,P_\mu ,
\end{equation}
where $\{P_\mu\}=\{\mathbb{I},X,Y,Z\}^{\otimes2}$ are the sixteen two-qubit Pauli
operators; the second form makes the trace-preservation
$\tfrac{1}{16}\sum_\mu P_\mu\rho P_\mu=\tfrac{1}{4}\mathbb{I}_4\,\mathrm{Tr}\rho$
manifest.

\bibliography{bibliography}
\end{document}